\newcommand{\be}{\begin{equation}}
\newcommand{\ee}{\end{equation}}
\newcommand{\ba}{\begin{array}}
	\newcommand{\ea}{\end{array}}
\newcommand{\bea}{\begin{eqnarray}}
\newcommand{\eea}{\end{eqnarray}}
\newcommand{\ra}{\rangle}
\newcommand{\la}{\langle}
\newcommand{\calC}{{\cal C }}
\newcommand{\calS}{{\cal S }}
\newcommand{\EE}{\mathbb{E}}
\newtheorem{lemma}{Lemma}
\begin{document}

\title{Mitigating measurement errors in multi-qubit experiments }

\author{Sergey Bravyi}
\affiliation{IBM Quantum, T. J. Watson Research Center, Yorktown Heights, NY 10598}
\author{Sarah Sheldon}
\affiliation{IBM Quantum, Almaden Research Center, San Jose, CA 95120}
\author{Abhinav Kandala}
\affiliation{IBM Quantum, T. J. Watson Research Center, Yorktown Heights, NY 10598}
\author{David C. Mckay}
\affiliation{IBM Quantum, T. J. Watson Research Center, Yorktown Heights, NY 10598}
\author{Jay M. Gambetta}
\affiliation{IBM Quantum, T. J. Watson Research Center, Yorktown Heights, NY 10598}

\begin{abstract}
Reducing measurement errors in multi-qubit quantum devices
is critical for performing any quantum algorithm.  
Here we show how to mitigate measurement errors 
by a classical post-processing of the measured outcomes. Our 
techniques apply to  any experiment where measurement outcomes are used for computing
expected values of  observables. 
Two error mitigation schemes are presented based on tensor product and correlated 
Markovian noise models. 
Error rates parameterizing these noise models can be extracted
from the measurement calibration data using a simple formula.
Error mitigation is achieved by applying the inverse noise matrix
to a probability vector that represents the outcomes of a noisy measurement.
The error mitigation overhead, including the the number
of measurements and the cost of the classical post-processing,
is exponential in  $\epsilon n$,
where $\epsilon$ is the maximum error rate and $n$ is the number of qubits. 
We report experimental demonstration of our error mitigation methods
on IBM Quantum devices using stabilizer measurements for graph states
with $n\le 12$ qubits and  entangled $20$-qubit states generated by low-depth random Clifford circuits.
\end{abstract}

\maketitle

\section{introduction}
Quantum computing experiments are beginning a shift from 
few-qubit demonstrations of entangling gates and short quantum circuits to 
larger multi-qubit  quantum algorithms attempting to address 
practically important computational problems.
Although available devices are noisy, error mitigation methods have emerged as a possible 
near-term solution of the fault-tolerance problem~\cite{temme2017error,li2017efficient,li2018practical,otten2018recovering,bonet2018low}.
Error mitigation schemes are particularly attractive
because they introduce no overhead in terms of the number of qubits and gates.
These methods have recently enabled a reliable simulation of shallow quantum circuits
on a noisy hardware without resorting to quantum error correcting codes~\cite{kandala2018extending, havlivcek2019supervised}.
The key idea behind error mitigation is to combine outcomes
of multiple experiments in a way that cancels the contribution of noise to the
quantity of interest, see~\cite{temme2017error,li2017efficient,li2018practical} for details.

Readout errors introduced by imperfect qubit measurements
are often the dominant factor
limiting scalability of near-term devices. 
Here we introduce methods for mitigating readout errors 
for a class of quantum algorithms, where measurement outcomes
are only used for computing mean values of observables.
Notable examples of such algorithms are 
variational quantum eigensolvers~\cite{Peruzzo2014,Kandala2017}, quantum
machine learning \cite{havlivcek2019supervised}, and
tomography of entangled states~\cite{dicarlo2010preparation}.
However, our techniques are not applicable to single-shot
measurements. The latter are indispensable to many 
applications such as quantum teleportation,
measurement-based computation, and quantum error correction.

Prior work~\cite{chen2019detector, chow2010detecting, ryan2015tomography, merkel2013self, chow2012universal} demonstrated that readout errors in quantum devices
based on superconducting qubits
can be well understood in terms of purely classical noise models.
Such models describe a noisy $n$-qubit measurement
 by a matrix of transition probabilities $A$
 of size $2^n\times 2^n$
such that $A_{y,x}$ is  the probability of observing
a measurement outcome $y$ provided that the true outcome is $x$.
It is common to simplify the noise model further by assuming that the
noise acts independently on each qubit~\cite{sun2018efficient, Kandala2017}.
This defines a tensor product noise model such that 
$A$ is a tensor product of $2\times 2$ noise matrices.
The model depends on $2n$ error rates describing single-qubit
readout 
errors $0\to 1$ and $1\to 0$. The open-source software package, Qiskit~\cite{qiskit}, 
provides a toolkit for calibrating $A$ and mitigating readout errors
for either a tensor product or a general noise model. 

Although the tensor product noise model is appealingly simple, it leaves aside cross-talk
errors encountered in real-world setups.   Cross-talk during readout can arise from the underlying qubit-qubit coupling, 
and spectral overlap of readout resonators with stray couplings or multiplexing \cite{heinsoo}.
Here we introduce a correlated noise model
based on Continuous Time Markov Processes (CTMP).
The corresponding noise matrix has the form $A=e^G$, where
$G$ is a sum of local operators generating single- and two-qubit
readout errors such as $01\to 10$, $11\to 00$, etc.
The model depends on $2n^2$ error rates.

Measurement calibration aims at learning the 
unknown parameters of a noise model.
This is achieved by preparing a set of
well-characterized input states (we use the standard basis vectors) 
and repeatedly performing $n$-qubit measurements on each input state.
We show how to extract 
parameters of the considered noise models 
from the  calibration data
using simple analytic formulas. 
The number of input states required to calibrate
the tensor product and CTMP noise models scales
linearly with the number of qubits $n$.

Once parameters of the noise model are known, 
error mitigation proceeds by applying the inverse  noise
matrix to a probability vector that represents the noisy measurement outcomes.
We show how to sidestep the explicit computation of the inverse noise matrix,
which may be prohibitive for a large number of qubits.
Instead, we follow ideas of~\cite{temme2017error} and 
cancel errors by sampling noise matrices from a suitable quasi-probability
distribution. The total error mitigation overhead,
including the number of measurements and the cost of classical post-processing,
scales as $e^{4\gamma} poly(n)$, where
$\gamma$ is a parameter quantifying the noise strength,
as formally defined below.
For example, if each qubit independently undergoes a bit-flip readout error
with probability $\epsilon$ then $\gamma = n\epsilon$.
Thus our methods are practical
whenever $\gamma$ is a small constant.  
For comparison, all previously studied
readout error mitigation techniques require manipulations with 
probability vectors of size $2^n$ which limits their scalability.
Importantly, both tensor product and CTMP error mitigation methods
provide an unbiased estimator of the ideal mean value to be measured.
For a fixed number of qubits, the cost of approximating the ideal mean
value with a specified error tolerance $\delta$ scales as $\delta^{-2}$.

The proposed  error mitigation techniques are demonstrated using 
a $20$-qubit
IBM Quantum device. The device has a few-percent readout error rate.
For small number of qubits ($n\le 7$), a comparison is performed
between the full noise matrix $A$, the tensor product,
and the CTMP noise models. It is found that the CTMP model provides
a considerably more accurate approximation of the readout noise.
Error mitigated stabilizer measurements  of $n$-qubit graph states
are reported for $n\le 12$. Finally, we perform error-mitigated
stabilizer measurements on $20$-qubit states generated
by low-depth random Clifford circuits. 
The classical post-processing
associated with the error mitigation, including the measurement
calibration and the noise inversion steps, can be done on a laptop
in a few-second time frame
for each of these experiments.
Note that the error mitigation runtime is crucial for 
VQE-type algorithms~\cite{Kandala2017} 
where quantum mean value estimation is repeatedly used
in a feedback loop.

Let us briefly discuss the previous work.
Seif et al~\cite{seif2018machine} considered
a system of trapped-ion qubits  and the problem of 
discriminating between different basis states based
on data collected by an array of photon detectors.
It was shown that a neural network trained on the measurement
calibration data can realize a high-fidelity state discrimination~\cite{seif2018machine}.
It remains to be seen whether similar methods can address
the problem of 
high-precision mean value estimation considered here.
Our techniques build off earlier work from Kandala et al~\cite{Kandala2017} and are most closely related to the work
of Sun and Geller~\cite{sun2018efficient,geller2020rigorous,geller2020efficient} who showed
how to characterize readout errors
in multi-qubit devices. In particular, Ref.~\cite{sun2018efficient}
introduced a noise model based on single- and two-qubit correlation
functions which can be viewed as an analogue of the CTMP
model considered here.
Ref.~\cite{maciejewski2019mitigation} elucidated 
how performance of readout error mitigation methods
is affected
by the quality of the measurement calibration, number of measurement samples, 
and the presence of coherent (non-classical) measurement errors,
see also~\cite{chen2019detector}.
Ref.~\cite{nachman2019unfolding} pointed out a 
surprising connection between quantum readout error mitigation
methods and unfolding algorithms used in high energy physics
to cope with a finite resolution of particle detectors~\cite{cowan2002survey}.
However the classical processing cost of unfolding methods may be prohibitive
for multi-qubit experiments  since they require  manipulations with 
exponentially large  probability vectors.
Applications of the noise matrix inversion method
in the context of variational quantum algorithms are discussed
in~\cite{hamilton2019error}. 

After completion of the present paper we became aware of a closely related work
by Hamilton, Kharazi et al.~\cite{hamilton2020scalable} which developed a scalable characterization of 
readout errors based on a cumulant expansion. The latter can be viewed
as a counterpart of the CTMP noise model considered here. 
Namely,  both methods provide a recipe for combining single- and two-qubit noise matrices into 
the full noise matrix describing a multi-qubit device.

The paper is organized as follows. 
Section~\ref{sec:unbiased} defines an error-mitigated
mean value of an observable
and shows that it provides an unbiased estimator of the ideal
mean value. The tensor product and CTMP
noise models are discussed in Sections~\ref{sec:product},\ref{sec:correlated}.
The parameters of these noise models can be extracted from the measurement
calibration data as described in Section~\ref{sec:calib}.
The experimental demonstration of our error mitigation methods
is reported in Section~\ref{sec:experiment}.
Appendices~A,B,C prove technical lemmas used in the main text and 
provide additional details on the implementation of our algorithms.
The experimental hardware is described in Appendix~D.

\section{Unbiased error mitigation}
\label{sec:unbiased}

Let $\rho$ be the output state of a (noisy) quantum circuit
acting on $n$ qubits.
Our goal is to measure the expected value of a given 
observable $O$ on the state $\rho$ within a specified precision $\delta$.
For simplicity, below we assume that the observable $O$ is diagonal in the
standard basis and takes values in the range $[-1,1]$, that is,
\[
O=\sum_{x\in \{0,1\}^n} O(x) |x\ra\la x|, \qquad |O(x)|\le 1.
\]
Consider $M$ independent experiments where each
experiment prepares the state $\rho$ and then measures
each qubit in the standard basis. Let $s^i\in \{0,1\}^n$
be the string of measurement outcomes
observed in the $i$-th experiment.
In the absence of measurement errors
one can approximate the mean value $\mathrm{Tr}(\rho O)$
by an empirical mean value 
\be
\label{xi_ideal}
\mu = M^{-1} \sum_{i=1}^M O(s^i).
\ee
By Hoeffding's inequality,  
$|\mu - \mathrm{Tr}(\rho O)|\le \delta$ with high probability
(at least $2/3$) if we choose $M= 4\delta^{-2}$. Furthermore,
$\mu$ is an unbiased estimator of $\mathrm{Tr}(\rho O)$.

Suppose now that measurements are noisy. 
The most general model of a noisy $n$-qubit measurement
involves a POVM with $2^n$ elements  $\{\Pi_x\}$
labeled by $n$-bit strings $x$. The probability of observing
a measurement outcome $x$ is given by  $P(x)=\mathrm{Tr}(\rho \Pi_x)$.
In the ideal case one has $\Pi_x=|x\ra\la x|$ while
in the presence of noise $\Pi_x$ could be arbitrary positive semi-definite
operators that obey the normalization condition $\sum_x \Pi_x = I$.
To enable efficient error mitigation, we make a simplifying assumption
that all POVM elements $\Pi_x$ are diagonal in the standard basis,
that is,
\[
\Pi_y = \sum_x \la y|A|x\ra \, |x\ra\la x|
\]
for some stochastic matrix $A$ of size $2^n$.
In other words, $\la y|A|x\ra$ is
the probability of observing an outcome $y$ provided that the true outcome is $x$.
For example, suppose $n=1$ and 
\[
A = \left[ \ba{cc} 0.9 & 0.2 \\ 0.1 & 0.8 \\ \ea \right].
\]
Then the probability of measuring $1$ on a qubit prepared in the state $|0\ra$
is $10\%$. The probability of measuring $0$ on a qubit prepared in the state $|1\ra$
is $20\%$. For brevity, we shall refer to such readout errors
as $0\to 1$ and $1\to 0$ respectively.
In this section we assume
that the matrix $A$ is known (this assumption is relaxed in Section~\ref{sec:calib}).
By analogy with Eq.~(\ref{xi_ideal}),
define an error-mitigated empirical mean value as
\be
\label{xi}
\xi = M^{-1} \sum_{i=1}^M \; \sum_{x} O(x) \la x|A^{-1} |s^i\ra.
\ee
\begin{lemma}
\label{lemma:xi}
The random variable $\xi$  is
an unbiased estimator of $\mathrm{Tr}(\rho O)$ with
the standard deviation $\sigma_\xi \le \Gamma M^{-1/2}$,
where 
\be
\label{Gamma}
\Gamma =  \max_y \sum_x  |\la x|A^{-1} |y\ra|.
\ee
\end{lemma}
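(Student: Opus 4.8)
The plan is to compute the expectation and variance of $\xi$ directly, exploiting linearity and the fact that the $s^i$ are i.i.d.\ samples from the noisy outcome distribution $P(y) = \langle y | A | \text{(ideal distribution)}\rangle$. First I would observe that for a single experiment the outcome $s$ is drawn with probability $P(y) = \sum_x A_{y,x} p(x)$, where $p(x) = \langle x|\rho|x\rangle$ is the ideal outcome distribution (recall $O$ and the $\Pi_x$ are diagonal, so $\mathrm{Tr}(\rho O) = \sum_x O(x) p(x)$). Define the single-sample random variable $\eta = \sum_x O(x)\langle x| A^{-1}|s\rangle$, so that $\xi = M^{-1}\sum_i \eta_i$ with the $\eta_i$ i.i.d.\ copies of $\eta$.

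For unbiasedness, I would write
\[
\EE[\eta] = \sum_y P(y) \sum_x O(x) \langle x|A^{-1}|y\rangle
= \sum_x O(x) \sum_y \langle x|A^{-1}|y\rangle P(y).
\]
Since $P(y) = \sum_z A_{y,z} p(z)$, the inner sum telescopes: $\sum_y \langle x|A^{-1}|y\rangle \langle y|A|z\rangle = \langle x | A^{-1} A | z\rangle = \delta_{x,z}$, giving $\EE[\eta] = \sum_x O(x) p(x) = \mathrm{Tr}(\rho O)$. Hence $\EE[\xi] = \mathrm{Tr}(\rho O)$, so $\xi$ is unbiased. (This step quietly uses that $A$ is invertible; I would note that this is assumed, since otherwise $A^{-1}$ in Eq.~(\ref{xi}) is undefined.)

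For the variance bound, since the $\eta_i$ are independent, $\sigma_\xi^2 = M^{-1}\mathrm{Var}(\eta) \le M^{-1}\EE[\eta^2]$. Now bound $|\eta|$ pointwise: for any fixed outcome $y$,
\[
|\eta| = \Bigl|\sum_x O(x)\langle x|A^{-1}|y\rangle\Bigr| \le \sum_x |O(x)|\,|\langle x|A^{-1}|y\rangle| \le \sum_x |\langle x|A^{-1}|y\rangle| \le \Gamma,
\]
using $|O(x)|\le 1$ and the definition (\ref{Gamma}) of $\Gamma$. Therefore $\EE[\eta^2]\le \Gamma^2$ and $\sigma_\xi \le \Gamma M^{-1/2}$, which is the claim.

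The argument is essentially routine; there is no serious obstacle. The only point requiring a little care is the interchange of summation order in the unbiasedness computation and the telescoping identity $A^{-1}A = I$ — one should be explicit that all sums are finite (over the $2^n$ strings), so Fubini is trivial, and that the telescoping is exactly where the post-processing "undoes" the noise channel. A secondary remark worth including is that the bound $\mathrm{Var}(\eta)\le\EE[\eta^2]$ is loose but sufficient; a sharper constant is not needed since the overhead is already being tracked through $\Gamma$ (and later through $\gamma$). I would close by noting that, combined with Chebyshev's (or Hoeffding's, using the pointwise bound $|\eta|\le\Gamma$) inequality, this yields $|\xi - \mathrm{Tr}(\rho O)|\le\delta$ with constant probability once $M = O(\Gamma^2\delta^{-2})$, mirroring the ideal case $\Gamma = 1$.
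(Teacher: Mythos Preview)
Your proof is correct and follows essentially the same approach as the paper's: both establish unbiasedness via $A^{-1}A=I$ and bound the variance by $\Gamma^2/M$ using the column $\ell_1$-norm of $A^{-1}$. Your packaging via the single-sample variable $\eta$ with the pointwise bound $|\eta|\le\Gamma$ is slightly more direct than the paper's computation of $\EE(|\psi\rangle\langle\psi|)$, and it has the side benefit of immediately justifying the later appeal to Hoeffding's inequality.
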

A proof of the lemma is given in Appendix~\ref{app:1}.
By Hoeffding's inequality,  
$|\xi - \mathrm{Tr}(\rho O)|\le \delta$ with high probability
(at least $2/3$)  if the number of measurements is 
\be
\label{shots_upper_bound}
M = 4\delta^{-2} \Gamma^2.
\ee
Recall that $M\sim\delta^{-2}$ in the ideal case.
Thus the quantity $\Gamma^2$ can be viewed as an error mitigation overhead.
In other words, error mitigation increases
the number of measurements required to achieve a given precision $\delta$  by 
a factor of $\Gamma^2$ compared with the case of ideal measurements.

\section{Tensor product noise}
\label{sec:product}

 Consider first a simple case
when  $A$ is a tensor product of $2\times 2$ stochastic matrices,
\be
\label{product_noise}
A = \left[ \ba{cc} 1-\epsilon_1 & \eta_1 \\
\epsilon_1 & 1-\eta_1 \\ \ea \right]
\otimes \cdots \otimes 
\left[ \ba{cc} 1-\epsilon_n & \eta_n \\
\epsilon_n & 1-\eta_n \\ \ea \right].
\ee
Here $\epsilon_j$ and $\eta_j$ are error rates
describing readout errors $0\to 1$ and $1\to 0$ respectively.
The error-mitigated mean value $\xi$ defined in Eq.~(\ref{xi}) can be easily computed 
when the observable $O$ has a tensor product form,
\[
O=O_1\otimes \cdots \otimes O_n.
\]
For example, $O$ could be a product of Pauli $Z$
operators on some subset of qubits.
Alternatively, $O$ could project some subset of qubits
onto $0$ or $1$ states.
Define a single-qubit state $|e\ra=|0\ra+|1\ra$.
From Eqs.~(\ref{xi},\ref{product_noise}) one  gets
\be
\label{productO}
\xi = M^{-1} \sum_{i=1}^M \;  \prod_{j=1}^n  \la e|O_j \left[ \ba{cc} 1-\epsilon_j & \eta_j \\
\epsilon_j & 1-\eta_j \\ \ea \right]^{-1} |s^i_j\ra.
\ee
Here $s^i \in \{0,1\}^n$ is the outcome observed in the $i$-th measurement
and $s^i_j\in \{0,1\}$ is $j$-th bit of $s^i$. 
The single-qubit matrix element that appears in Eq.~(\ref{productO}) can be computed
in constant time for any fixed pair $i,j$. Thus one can compute $\xi$  in time $\approx nM$,
that is, the classical post-processing runtime is linear in the
number of qubits and the number of measurements.
The above algorithm is generalized to arbitrary observables $O$
in Appendix~\ref{app:product}.
The algorithm has runtime 
 $\approx nM\tau$, where $\tau$
 is the cost of computing the observable $O(x)$
 for a given $x$.

Suppose the noise is weak such that
$\epsilon_j,\eta_j\le 1/2$ for all $j$.
Substituting the definition of $A$ into Eq.~(\ref{Gamma})
one gets
\be
\label{GammaProduct}
\Gamma= \prod_{j=1}^n \frac{1+|\epsilon_j-\eta_j|}{1-\epsilon_j-\eta_j}.
\ee
We show how to extract the error rates $\epsilon_j$ and $\eta_j$
from the readout calibration data in Section~\ref{sec:calib}.
Then Eqs.~(\ref{shots_upper_bound},\ref{GammaProduct})
determine the number of measurements $M$, as a function of the desired precision $\delta$.
Assuming that $\epsilon_j,\eta_j\ll 1$ one gets
$\Gamma\approx e^{2 \gamma}$, where $\gamma$ is the {\em noise strength}
defined as
\[
\gamma = \sum_{j=1}^n \max{\{ \epsilon_j,\eta_j\}}.
\]
The error mitigation overhead thus scales as $\Gamma^2\approx e^{4\gamma}$.

\section{Correlated Markovian noise}
\label{sec:correlated}

How can one extend the tensor product noise model to account for
correlated (cross-talk) errors ? A natural generalization of a tensor product 
operator  is a Matrix Product Operator (MPO)~\cite{schollwock2011density}.
Unfortunately, we empirically observed that MPO-based noise models 
poorly agree with the experimental data, unless the MPO bond dimension is unreasonably large.
Noise models based on neural networks provide another possible extension~\cite{seif2018machine}.
However, training such noise models and computing error-mitigated mean values in a few-second
time frame might not be possible for multi-qubit experiments.
Instead, here we propose noise models based on 
Continuous Time Markov Processes (CTMP)~\cite{gillespie1977exact}.

Define
a CTMP noise model as a stochastic matrix $A$ of size $2^n\times 2^n$ such 
that~\footnote{Although our definition makes no reference to time,
one can express $A$ as a solution of a differential equation
$\dot{A}(t)=GA(t)$ with the initial condition $A(0)=I$. Thus $G$
serves as a generator of a continuous time Markov process
such that $\la y|G|x\ra$ is the rate of transitions from a state $x$
to a state $y\ne x$.}
\be
\label{CTMP}
A=e^G, \qquad G= \sum_{i=1}^{2n^2} r_i G_i
\ee
where $e^G=\sum_{p=0}^\infty G^p/p!$ is the matrix exponential,
$r_i\ge 0$ are error rates, and $G_i$ 
are single-qubit or two-qubit operators
from the following list

\begin{center}
\begin{tabular}{c|c|c}
\toprule
Generator $G_i$ & Readout error & Number of generators\\
\hline
$|1\ra\la 0| - |0\ra\la 0|$ & $0\to 1$ & $n$ \\
\hline
$|0\ra\la 1| - |1\ra\la 1|$ & $1\to 0$ & $n$ \\
\hline
$|10\ra\la 01| - |01\ra\la 01|$ & $01\to 10$ & $n(n-1)$ \\
\hline
$|11\ra\la 00|-|00\ra\la 00|$ & $00\to 11$ & $n(n-1)/2$ \\
\hline
$|00\ra\la 11| - |11\ra\la 11|$ & $11\to 00$ & $n(n-1)/2$ \\
\botrule
\end{tabular}
\end{center}
Each operator $G_i$ generates a readout error on
some bit or some pair of bits. 
The right column shows the number of ways to choose
qubit(s) acted upon by a generator.
The negative terms in $G_i$ 
ensure that $A$ is a stochastic matrix.
The  CTMP model depends on $2n^2$ parameters $r_i$, as can be seen
by counting the number of generators $G_i$ of each type.
We show how to extract the error rates $r_i$
from the readout calibration data  in Section~\ref{sec:calib}.
Furthermore, 
the tensor product model Eq.~(\ref{product_noise})
is a special case of  CTMP with $r_i=0$ for
all two-qubit errors. Indeed, in this case Eq.~(\ref{CTMP})
defines a tensor product of single-qubit stochastic matrices
\[
\left[ \ba{cc} 1-\epsilon & \eta \\
\epsilon & 1-\eta \\ \ea \right]=
e^G, \quad G = -\frac{\log{(1-\epsilon-\eta)} }{\epsilon+\eta} 
\left[ \ba{cc} -\epsilon & \eta \\
\epsilon & -\eta \\ \ea \right].
\]
Here we assume that $\epsilon+\eta<1$.

Next let us show how to compute  the error-mitigated 
mean value $\xi$ defined in Eq.~(\ref{xi}).
Since $\xi$ itself is only a $\delta$-estimate of the true 
mean value $\mathrm{Tr}(\rho O)$, 
it suffices to approximate  $\xi$ within an error 
$\approx \delta$.  
We use a version of the well-known  Gillespie algorithm~\cite{gillespie1977exact}
for simulating continuous Markov processes 
combined with the following lemma proved in Appendix~\ref{app:stochastic}.
\begin{lemma}
	\label{lemma:QPR}
	Suppose $A$ is an invertible stochastic matrix. 
	There exist  stochastic matrices
	$S_\alpha$ and real coefficients $c_\alpha$ such that $\|c\|_1 \equiv \sum_\alpha |c_\alpha|\ge  \Gamma$ and
	\be
	\label{QPR}
	A^{-1}=\sum_\alpha c_\alpha S_\alpha.
	\ee
	Furthermore, $\|c\|_1= \Gamma$ for some decomposition as above.
\end{lemma}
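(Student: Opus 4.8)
The plan is to prove two things: (i) the lower bound $\|c\|_1 \ge \Gamma$ for \emph{any} representation $A^{-1}=\sum_\alpha c_\alpha S_\alpha$ with $S_\alpha$ stochastic, and (ii) an explicit representation attaining $\|c\|_1 = \Gamma$; together these also identify $\Gamma$ as the minimal $1$-norm of any such quasi-probability decomposition.

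Part (i) should be immediate. Given a decomposition with $S_\alpha$ stochastic, fix a column label $y$ and combine the triangle inequality with the fact that the $y$-th column of each $S_\alpha$ is non-negative and sums to one:
\[
\sum_x |\la x|A^{-1}|y\ra| \;\le\; \sum_\alpha |c_\alpha| \sum_x \la x|S_\alpha|y\ra \;=\; \sum_\alpha |c_\alpha| \;=\; \|c\|_1 .
\]
Maximizing over $y$ gives $\Gamma \le \|c\|_1$.

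For part (ii), write $B:=A^{-1}$ and split it into its entrywise non-negative parts with disjoint supports, $B = B_+ - B_-$. Since $A$ is stochastic, $\mathbf{1}^{T}A=\mathbf{1}^{T}$, hence $\mathbf{1}^{T}B=\mathbf{1}^{T}$, so every column of $B$ sums to one; if $d_y$ is the $y$-th column sum of $B_-$, then the $y$-th column sum of $B_+$ is $1+d_y$ and $\gamma_y := \sum_x |\la x|B|y\ra| = 1+2d_y$, so $\Gamma=\max_y \gamma_y \ge 1$ and $\max_y d_y = (\Gamma-1)/2$. The crucial step is to equalize all column sums by adding a common non-negative ``slack'' matrix to both parts: let $E\ge 0$ be any matrix whose $y$-th column sums to $e_y := (\Gamma-\gamma_y)/2 \ge 0$ (for instance, dump all of $e_y$ into one fixed row). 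Then $\tilde B_\pm := B_\pm + E \ge 0$, their difference is still $B$, every column of $\tilde B_+$ sums to $C_+ := (\Gamma+1)/2$, and every column of $\tilde B_-$ sums to $C_- := (\Gamma-1)/2$. Assuming $\Gamma>1$, the matrices $S_+ := \tilde B_+/C_+$ and $S_- := \tilde B_-/C_-$ are stochastic and
\[
A^{-1} \;=\; C_+ S_+ - C_- S_- , \qquad \|c\|_1 \;=\; C_+ + C_- \;=\; \Gamma .
\]
The borderline case $\Gamma=1$ forces $d_y\equiv 0$, so $B_-=0$, $B$ is already stochastic, and $A^{-1}=1\cdot B$ works with $\|c\|_1=1=\Gamma$.

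I expect the only genuine obstacle to be spotting the ``add the same non-negative $E$ to both $B_+$ and $B_-$'' trick: the naive alternative of decomposing $B$ column by column into probability vectors has cost $\sum_y \gamma_y$ rather than $\max_y \gamma_y$, and it is not a priori obvious that the latter is achievable at all. Everything else is routine bookkeeping — verifying $\tilde B_\pm\ge 0$, the uniformity of the new column sums, and the trivial $\Gamma=1$ case. (If a finer decomposition is desired for the sampling algorithm, one can instead expand the concrete $A^{-1}=e^{-G}$ of the CTMP model, but for Lemma~\ref{lemma:QPR} itself the two-term construction above already delivers both existence and tightness.)
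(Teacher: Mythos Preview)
Your proof is correct, and it takes a genuinely different route from the paper. The lower bound in part~(i) is identical to the paper's argument. For the achievability part, however, the paper proceeds by induction on the number of nonzero entries of $M=A^{-1}$: at each step it identifies a smallest-magnitude entry $\omega$ (with appropriate sign relative to the common column sum), subtracts $\omega$ times a deterministic stochastic matrix $\sum_j |f(j)\rangle\langle j|$ chosen so that every column $1$-norm drops by exactly $|\omega|$, and recurses on a matrix with strictly fewer nonzeros. Your construction is both shorter and sharper in one respect: by adding the same nonnegative slack $E$ to $B_+$ and $B_-$ you equalize all column sums simultaneously and obtain a \emph{two-term} decomposition $A^{-1}=C_+S_+ - C_-S_-$ attaining $\|c\|_1=\Gamma$, whereas the paper's peeling argument produces up to as many terms as there are nonzero entries. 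What the paper's approach buys is that each $S_\alpha$ is a deterministic map (a column-stochastic $\{0,1\}$ matrix), so sampling from $\langle x|S_\alpha|y\rangle$ is trivial; it also proves the slightly more general statement for any real matrix with constant column sums, not just $A^{-1}$ with column sums equal to $1$, though your argument extends to that case with essentially no change.
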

Here $\Gamma$ is the quantity defined in Eq.~(\ref{Gamma}).
Suppose we have found a decomposition Eq.~(\ref{QPR}) for the 
inverse CTMP
noise matrix $A^{-1}=e^{-G}$.
Then the error-mitigated mean value  $\xi$ defined in Eq.~(\ref{xi})
can be approximated using the following algorithm. 
\begin{center}
	\fbox{\parbox{\linewidth}{
			%\begin{algorithm}
			{\bf Algorithm 1: }
			\begin{algorithmic}
				\State{$T\gets  4\delta^{-2}\|c\|_1^2$}
				\For{$t=1$ to $T$}
				\State{Sample $i \in [M]$ uniformly at random}
				\State{Sample $\alpha$ from the distribution $q_\alpha=|c_\alpha|/\|c\|_1$}
				\State{Sample $x$ from the distribution $\la x|S_\alpha |s^i\ra$}
				\State{$\xi_t\gets \mathrm{sgn}(c_\alpha)O(x)$}
				\EndFor
				\State{\Return{$\xi'=T^{-1}\|c\|_1\sum_{t=1}^T \xi_t$}}
			\end{algorithmic}
			%\end{algorithm}
	}}
\end{center}
Here $c_\alpha,S_\alpha$ are the same as in Lemma~\ref{lemma:QPR}.
We claim that the output 
 this algorithm satisfies $|\xi' -\xi|\le \delta$  with high probability
(at least $2/3$). 
Indeed,  substituting Eq.~(\ref{QPR}) into Eq.~(\ref{xi}) one gets
\be
\label{QPR1}
\xi = M^{-1} \|c\|_1 \sum_{i=1}^M 
\sum_\alpha \sum_x  q_\alpha \mathrm{sgn}(c_\alpha) O(x) \la x|S_\alpha |s^i\ra.
\ee
that is,
\be
\label{QPR2}
\xi = \|c\|_1 \EE_{i,\alpha,x}  \, \mathrm{sgn}(c_\alpha)O(x(i,\alpha)).
\ee
Here $i\in [M]$ is picked uniformly at random,
$\alpha$ is sampled from  $q_\alpha$,
and $x(i,\alpha)$ is sampled from the distribution $\la x|S_\alpha|s^i\ra$
with fixed $i,\alpha$.
Accordingly, a random variable
$\|c\|_1\mathrm{sgn}(c_\alpha)O(x(i,\alpha))$ is an unbiased estimator of $\xi$
with the variance at most $\|c\|_1^2$.
By Hoeffding's inequality,  
$|\xi' - \xi|\le \delta$ with high probability
(at least $2/3$).

%The main difficulty in implementing the above 
%algorithm lies in finding a QPR of $A^{-1}$ 
%that enables efficient computation of the coefficients $c_\alpha$,
%the norm $\|c\|_1$, 
%and  sampling the distributions $q_\alpha$ and $S_\alpha|s^i\ra$.
%In addition, we would like $\|c\|_1$ to be as small as possible since
%it determines the error mitigation overhead.
%One may ask
%how to choose the number of shots
%$M$ if the quantity $\Gamma$ is unknown~?
%Since $\Gamma\le \|c\|_1$ for any QPR, a natural choice
%is  $M=\delta^{-2} \|c\|_1^2$.
%This ensures that the variance of $\xi$
%is at most $\delta^2$, that is,
%$\xi$ is a $\delta$-estimate of $\EE(O)$ with high probability.

In the special case of the tensor product noise
one can efficiently compute a decomposition 
$A^{-1}=\sum_\alpha c_\alpha S_\alpha$
defined in Lemma~\ref{lemma:QPR}
with $\|c\|_1=\Gamma$,
see Appendix~\ref{app:product}.
This yields an
instantiation of Algorithm~1 that computes a $\delta$-estimate of $\xi$ in time roughly
\be
\label{runtime1}
n\Gamma^2 \delta^{-2} \sim nM.
\ee
Here $M$ is the number of measurements determined by Eq.~(\ref{shots_upper_bound}).
Further details on the implementation of Algorithm~1
for the tensor product noise
can be found in Appendix~\ref{app:product}.

Next, let us show how to implement Algorithm~1 for
the CTMP model $A=e^G$.
Define a parameter
\be
\label{gammaG}
\gamma = \max_{x\in \{0,1\}^n} -\la x|G|x\ra.
\ee
We shall see that the error mitigation overhead introduced by the CTMP
method scales as $e^{4\gamma}$. To avoid a confusion with individual error rates
$r_i$ we shall refer to the quantity $\gamma$ as the CTMP noise strength.
The function $-\la x|G|x\ra$ can be viewed as a classical Ising-like Hamiltonian
with two-spin interactions. Although finding the maximum energy
of such Hamiltonians is NP-hard in the worst case, this can be easily done
for moderate system sizes, say $n\le 50$, using heuristic optimizers
such as simulated annealing. Below we assume that the noise strength $\gamma$
has been already computed.

First we note  that $\gamma\ge 0$ since  $G$ has non-positive diagonal elements.
Define a matrix
\be
\label{Bmatrix}
B= I + \gamma^{-1} G.
\ee
Using Eq.~(\ref{gammaG}) and the fact that $G$
has zero column sums,
one can check
that $B$ is a stochastic matrix. Furthermore,
\be
\label{QPR3}
A^{-1} = e^{-G} = e^\gamma \cdot e^{-\gamma B}
=\sum_{\alpha=0}^\infty \frac{e^\gamma  (-\gamma)^\alpha}{\alpha!} B^\alpha
\equiv
\sum_{\alpha=0}^\infty c_\alpha  S_\alpha
\ee
where
\be
\label{QPR3'}
S_\alpha = B^\alpha \quad \mbox{and} \quad c_\alpha =  \frac{e^\gamma  (-\gamma)^\alpha}{\alpha!}.
\ee
Clearly, $S_\alpha$ is a stochastic matrix for any integer $\alpha\ge 0$.
Thus Eqs.~(\ref{QPR3},\ref{QPR3'}) provide a stochastic 
decomposition of $A^{-1}$ stated in Lemma~\ref{lemma:QPR}
with the $1$-norm
\be
\label{QPR4}
\|c\|_1 = \sum_{\alpha=0}^\infty  \frac{e^\gamma \gamma^\alpha}{\alpha!} = e^{2\gamma}.
\ee
The probability distribution 
\be
\label{QPR5}
q_\alpha = |c_\alpha|/\|c\|_1 = \frac{e^{-\gamma} \gamma^\alpha}{\alpha!}
\ee
is the Poisson distribution with the mean $\gamma$.
Substitute the decomposition
of Eqs.~(\ref{QPR3},\ref{QPR3'}) into Algorithm~1.
The only non-trivial step of the algorithm
is sampling $x$ from the distribution 
\[
\la x|S_\alpha|s^i\ra = \la x|B^\alpha|s^i\ra.
\]
This amounts to simulating $\alpha$ steps 
of a Markov Chain with the transition matrix $B$. 
Note that $B$ is a sparse efficiently computable matrix.
More precisely, each column of $B$ has $\approx n^2$ non-zero elements.
Thus one can simulate a single step of the Markov chain in time $\approx n^2$.
Since $\EE(\alpha)=\gamma$, 
simulating $\alpha$ steps of the Markov chain on average
takes time $\approx \gamma n^2$.
The overall runtime of Algorithm~1 becomes
\[
n^2 \gamma \|c\|_1^2 \delta^{-2} \sim  n^2 \gamma e^{4\gamma} \delta^{-2}.
\]
Here we used Eq.~(\ref{QPR4}). Assuming that cross-talk errors
result from short-range interactions between qubits, one should expect
that each qubit participates in a constant number of  generators $G_i$
independent of $n$.
If this is the case, each column of $G$ has at most $Cn$ non-zero elements
for some constant $C$. Then $\gamma \le C\epsilon n$, 
where $\epsilon = \max_i r_i$
is the maximum error rate. Thus
the exponential term in the runtime of 
Algorithm~1 becomes $e^{4 C\epsilon n}$.
Experimental data presented in 
Section~\ref{sec:experiment} are consistent with 
a linear scaling $\gamma \sim n$, see Fig.~\ref{fig:gamma}.

\section{Measurement calibration}
\label{sec:calib}

Calibration aims at learning
parameters of the noise model from the experimental data.
A single calibration round initializes the $n$-qubit register
in a basis state $|x\ra$  and performs a noisy measurement
of each qubit, keeping the record of the measured outcome $y$.
Let $m(y,x)$ be the number of rounds with the input state $x$ and the measured
output state $y$.
We fix some set of input states $\calC  \subseteq \{0,1\}^n$
and perform $N_{\mathsf{cal}}$ calibration rounds for each input state
$x\in \calC$. Thus $\sum_y m(y,x)=N_{\mathsf{cal}}$ if $x\in \calC$
and $m(y,x)=0$ if $x\notin \calC$.
The calibration requires $N_{\mathsf{cal}} \cdot |\calC|$ experiments in total.

Learning the full noise matrix 
requires calibration of each possible
input state, i.e.  $\calC= \{0,1\}^n$. Let
$A_{\mathsf{full}}$ be the empirical estimate of the
full noise matrix, i.e. 
 \be
 \label{cal_full}
 \la y | A_{\mathsf{full}}|x\ra  =\frac{m(y,x)}{N_{\mathsf{cal}}}
 \ee 
where $x,y\in \{0,1\}^n$.

To learn error rates of the tensor product or the CTMP model
it suffices to calibrate a small subset of basis states.
Consider first the tensor product model.
To ensure that each possible single-qubit readout error
is probed on some input state, we
assume that for any qubit $j$ 
the set $\calC$ contains at least one state $x$ with $x_j=0$
and at least one state with $x_j=1$.
Let $\epsilon_j$ and $\eta_j$  be the rates
of errors $0\to 1$ and $1\to 0$ on the $j$-th qubit, see Eq.~(\ref{product_noise}).
We set
\begin{align}
\epsilon_j =  & \frac{\sum_{x,y} m(y,x) \la 1|y_j\ra \la x_j|0\ra}{\sum_{x,y} m(y,x)\la x_j|0\ra}, \nonumber\\
\eta_j = & \frac{\sum_{x,y} m(y,x) \la 0|y_j\ra \la x_j|1\ra}{\sum_{x,y} m(y,x)\la x_j|1\ra}\label{calTP}
\end{align}
For example, $\epsilon_j$ is the fraction of calibration rounds
that resulted in a readout error $0\to 1$  on the $j$-th qubit
among all rounds with $x_j=0$.
Let $A_{\mathsf{tp}}$ be the tensor product noise matrix
defined
by the rates $\epsilon_j, \eta_j$ and Eq.~(\ref{product_noise}).

Next consider the CTMP model. Recall that the model is defined by $2n^2$ error rates 
describing readout errors $0\leftrightarrow 1$, $01\leftrightarrow 10$, and $00\leftrightarrow 11$.
To ensure that each possible two-qubit readout error
is probed on some input state, we
assume that for any pair of qubits $j\ne k$
and for any $v_j,v_k\in  \{0,1\}$ the set $\calC$ contains at least one state
$x$ such that $x_j=v_j$ and $x_k=v_k$.
For example, $\calC$ could include bit strings $0^n$, $1^n$, and all weight-$1$ strings. 
Fix a pair of qubits  $j\ne k$. Given 
a bit string $x\in \{0,1\}^n$, let $x_{\mathsf{in}}\in \{0,1\}^2$ be the restriction of $x$ onto the bits $j,k$
and $x_{\mathsf{out}}\in \{0,1\}^{n-2}$ be the restriction of $x$ onto all remaining bits
$i\notin \{j,k\}$. 
We choose  CTMP error rates on qubits $j,k$
by defining a local noise matrix $A(j,k)$ 
and fitting $A(j,k)$ with a two-qubit CTMP model. 
More formally, 
$A(j,k)$ is a stochastic matrix of size $4\times 4$ with matrix elements
\[
\la w| A(j,k)|v\ra =  
 \frac{\sum_{x,y} m(y,x) \la w|y_{\mathsf{in}}\ra \la x_{\mathsf{in}}|v\ra \la x_{\mathsf{out}}|y_{\mathsf{out}}\ra }{\sum_{x,y} m(y,x)\la x_{\mathsf{in}}|v\ra \la x_{\mathsf{out}}|y_{\mathsf{out}}\ra}.
\]
Here $v,w\in \{0,1\}^2$.
In words, $\la w| A(j,k)|v\ra$ is the fraction of calibration rounds 
that resulted in a readout error $v\to w$ on qubits $j,k$ 
and no errors on qubits $i\notin \{j,k\}$
among all rounds with $x_{\mathsf{in}}=v$ that resulted in no errors 
on qubits $i\notin \{j,k\}$. Let
\[
G(j,k) = \log{A(j,k)}.
\]
be the matrix logarithm of $A(j,k)$.
We choose the branch of the $\log$ function such that
$G(j,k)=0$ if $A(j,k)=I$. 
The rates of two-qubit errors occurring on the qubits $j,k$ are
chosen as shown in the following table.

\begin{center}
\begin{tabular}{c|c}
\toprule
Readout error  & CTMP error rate \\
\hline
$01\to 10$ & $\la 10 |G'(j,k)|01\ra$  \\
\hline
$10\to 01$ & $\la 01 |G'(j,k)|10\ra$ \\
\hline
$00\to 11$ &  $\la 11 |G'(j,k)|00\ra$ \\
\hline
$11\to 00$ & $\la 00|G'(j,k)|11\ra$  \\
\botrule
\end{tabular}
\end{center}
Here $G'(j,k)$ is a matrix obtained from $G(j,k)$ by setting
to zero all negative off-diagonal elements. 
Let $r_j^{0\to 1}$ and $r_j^{1\to 0}$ be the rates
of single-qubit readout errors $0\to 1$ and $1\to 0$ occurring on the $j$-th qubit.
We set
\[
r_j^{0\to 1} = \frac1{2(n-1)}  \sum_{k\ne j}  \la 10|G'(j,k)|00\ra + \la 11|G'(j,k)|01\ra,
\]
\[
r_j^{1\to 0} = \frac1{2(n-1)}  \sum_{k\ne j}  \la 00|G'(j,k)|10\ra + \la 01|G'(j,k)|11\ra.
\]
Here we  noted that two-qubit errors
$01\leftrightarrow 10$ or $00\leftrightarrow 11$ on qubits $j,k$
can only be generated by the local noise matrix $A(j,k)$.
Meanwhile, single-qubit errors $0\leftrightarrow 1$ on the $j$-th qubit
can be generated by any local noise matrix $A(j,k)$ with $k\ne j$.
The above definition of single-qubit error rates amounts to
taking the average over all such possibilities.
Let $A_{\mathsf{ctmp}}$ be the CTMP noise model defined by 
the above error rates and Eq.~(\ref{CTMP}).

The calibration method described above is well-defined only if
the set of input states $\calC$ is {\em complete} in the sense that
any two-bit readout error can be probed on some input state
$x\in \calC$. More formally, let us say that a set of $n$-bit strings $\calC$
is complete if  for any pair of bits $1\le a<b\le n$ and for any 
bit values $\alpha,\beta\in \{0,1\}$ there exists at least one string
$x\in \calC$ such that $x_a=\alpha$ and $x_b=\beta$,
see Fig.~\ref{fig:cal_input} for an example. 
Below we describe complete sets of input states used
in our experiments.

\begin{figure}[h]
	\includegraphics[height=2.5cm]{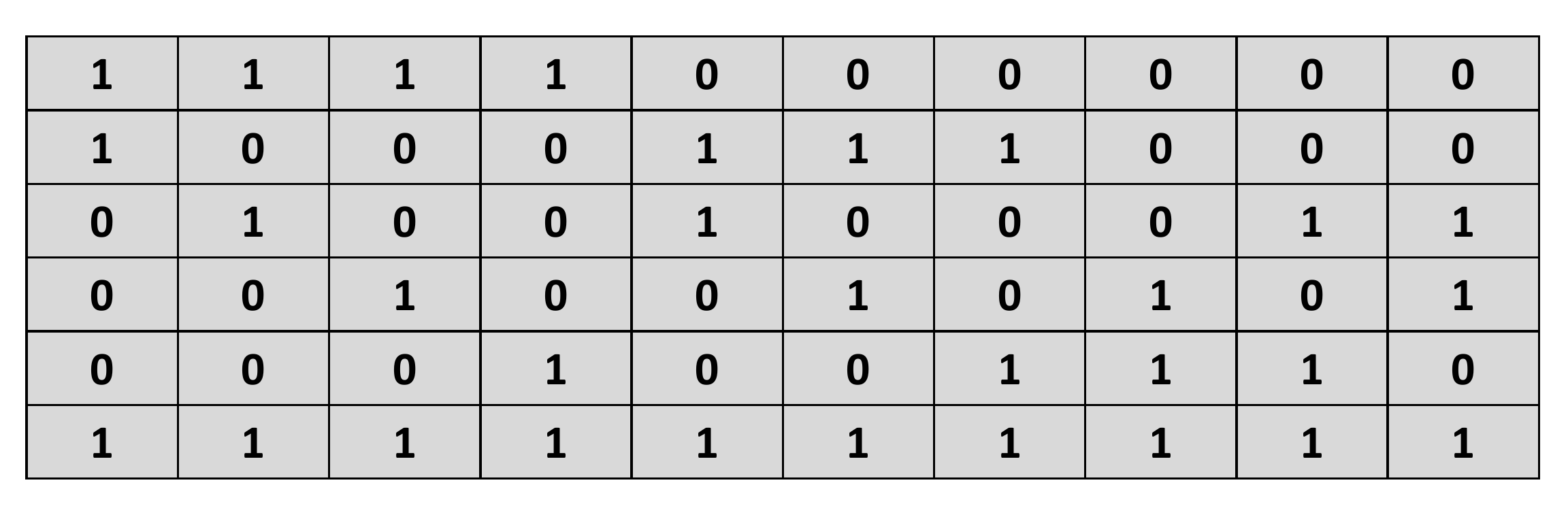}
	\caption{
	A complete set of input states for CTMP calibration on $n=10$ qubits.
        Each row of the matrix defines a bit string $x\in \calC$. 
	One can check that any pair of columns contains $00$, $01$, $10$, and $11$
in at least one row. Thus any two-bit readout error can be probed on some input $x\in \calC$.
We numerically checked that $|\calC|\ge 6$ for any complete set $\calC\subseteq \{0,1\}^{10}$.
 \label{fig:cal_input}}
\end{figure}

\noindent {\em Weight-$1$ calibration:}
the set $\calC$ includes all $n$-bit strings with the Hamming weight
$0$, $1$, $n$. For example, if $n=4$ then $\calC$
includes $0000$,  $1111$, and all permutations of $1000$.\\

\noindent {\em Weight-$2$ calibration:}
the set $\calC$ includes all $n$-bit strings with the Hamming 
weight $0,1,2$. For example, if $n=4$ then
$\calC$ includes $0000$,  all permutations of $1000$,
and all permutations of $1100$.\\

\noindent {\em Hadamard calibration:}  
Let $p$ be the smallest integer such that $n<2^p$.
Given an integer $a$ in the range $[0,2^p-1]$, let
$a_1,\ldots,a_p\in \{0,1\}$ be the binary digits of $a$.
We choose $\calC=\{x^0,\ldots,x^{2^p-1} \}$, where
$x^a\in \{0,1\}^n$ is defined by
\[
x^a_b = \sum_{i=1}^p a_i b_i {\pmod 2}
\]
for $b=1,\ldots,n$. For example, if $n=4$ then $\calC$
includes all even-weight bit strings $x\in \{0,1\}^4$.
In general, $|\calC|\le 2n$. One can easily check that each
two-bit error is probed on exactly $2^{p-2}$ input states
$x\in \calC$. Thus the calibration resources are allocated 
evenly among all possible errors.

\section{Experimental results}
\label{sec:experiment}

All experiments are performed using
$n$-qubit registers  of the 20-qubit IBM Quantum device
called $\mathsf{ibmq\_johannesburg}$,
see Appendix~\ref{app:hardware} for details.
Only the chosen $n$-qubit register is used for readout calibration.
The unused qubits are initialized in the $|0\ra$ state in each calibration round.
Measurement outcomes on the unused qubits are ignored.
All experiments were performed with $N_{\mathsf{cal}}=8192$ calibration rounds.

First, let us discuss how well the tensor product (TP) and the CTMP models
approximate the readout noise observed in the experiment. 
We  quantify the difference between a pair of $n$-qubit noise matrices
$A$ and $A'$ using the Total Variation Distance (TVD)
\[
\mathsf{TVD}(A,A') =\frac12 \; \max_{x} \sum_y | \la y|A|x\ra - \la y|A'|x\ra|
\]
The TVD provides an upper bound on the probability
of distinguishing output distributions $A|p\ra$ and $A'|p\ra$ for
any input distribution $p$. This enables a comparison between noise
models in the worst-case scenario rather than for a specific quantum state
and a specific observable.

For small number of qubits, $n\le 7$,
all three calibration methods 
described in Section~\ref{sec:calib}
have been performed. Let
$A_{\mathsf{full}}$ be the full noise matrix.
Let  $A_{\mathsf{tp}}$  and $A_{\mathsf{ctmp}}$
be the noise matrices predicted by 
the TP and CTMP models with weight-$2$ calibration.
Fig.~\ref{fig:TVDplot}  shows the observed distances 
$\mathsf{TVD}(A_{\mathsf{full}},A_{\mathsf{ctmp}})$
and  $\mathsf{TVD}(A_{\mathsf{full}},A_{\mathsf{tp}})$
for $n=4,5,6,7$ qubits (with the qubits chosen for $n<7$ consisting of subsets of the $n=7$ case) and $16$ independent experiments
per each number of qubits.
It indicates that the CTMP model provides a more accurate
approximation of the readout noise achieving
2X reduction in the TVD metric for $n=6,7$ qubits.  The increased separation between TP and CTMP models for larger numbers of qubits indicates that readout crosstalk is larger in the additional qubits used for the circuits with more qubits.
Note that this improvement comes at no additional cost
since both models have access to exactly the same calibration data set.

\begin{figure}[h]
	\includegraphics[width=\columnwidth]{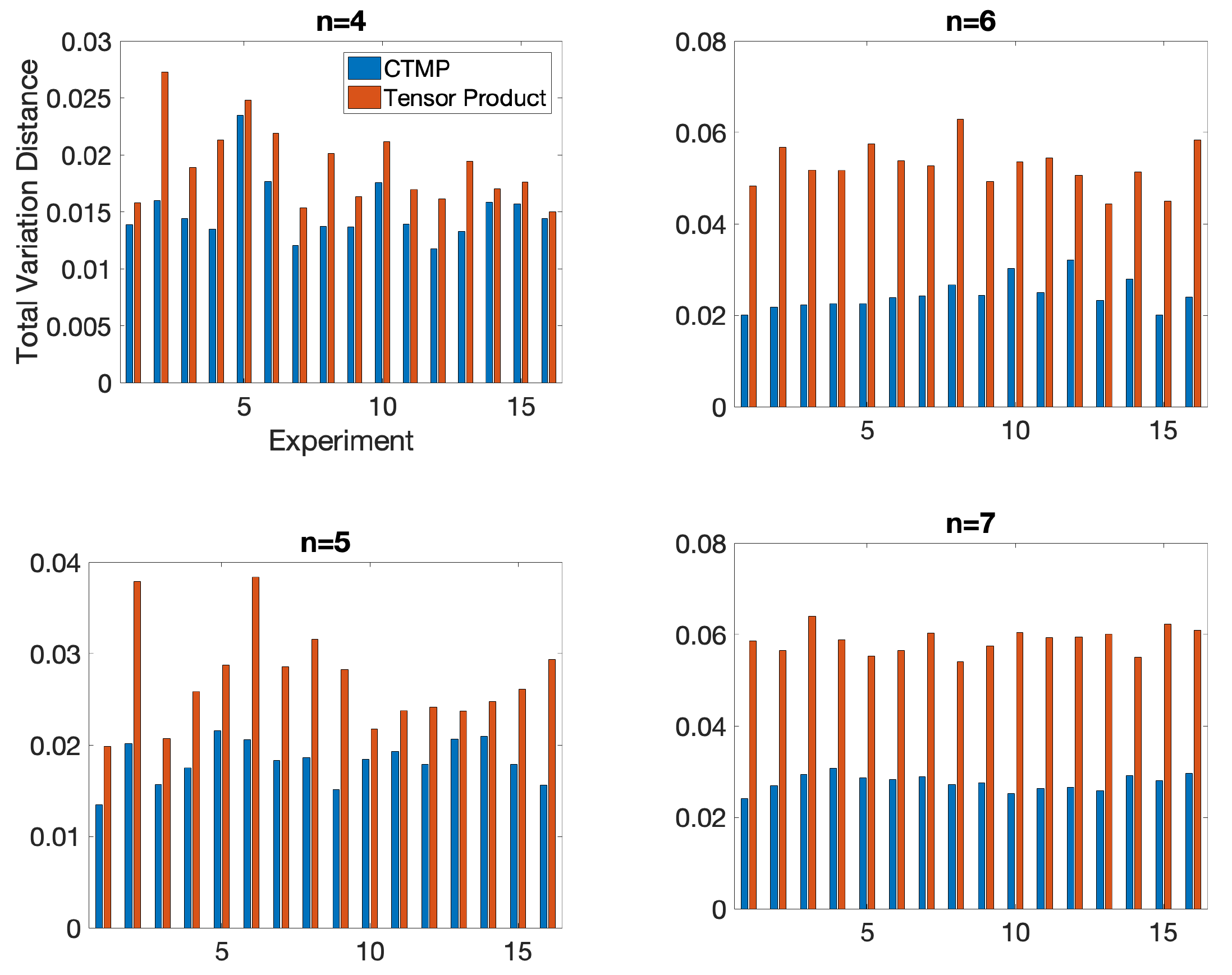}
	\caption{{\em Distance from the full $A$-matrix}. A comparison is made between the full
	$A$-matrix,
	the tensor product, and the CTMP noise models. The plot shows the
	total variation distance
	$\mathsf{TVD}(A_{\mathsf{full}},A_{\mathsf{ctmp}})$ (blue bars)
	and $\mathsf{TVD}(A_{\mathsf{full}},A_{\mathsf{tp}})$ (red bars) for $n=4,5,6,7$ qubits. 
	For each number of qubits we performed 
	$16$  independent experiments.
 \label{fig:TVDplot}}
\end{figure}

In order to compare the methods of measurement mitigation on relevant quantum circuits, we use graph states as an example of a highly entangled state that can serve as a benchmark for quantum devices.
We estimate the fidelity with stabilizer measurements for $n$-qubit graph states~\cite{raussendorf2001one} 
of the form
\be
\label{graph_state}
|G_n\ra = \left( \prod_{j=1}^{n-1} \mathsf{CZ}_{j,j+1} \right) \mathsf{H}^{\otimes n} |0^n\ra.
\ee
The corresponding graph is  a path with $n$ vertices.
The state $|G_n\ra$  is a stabilizer state with the stabilizer group $\calS$
generated by Pauli operators
$S_1 = X_1 Z_2$, $S_n=Z_{n-1} X_n$, and $S_j = Z_{j-1} X_j Z_{j+1}$
for $1<j<n$. Let $\rho$ is an approximate version of $|G_n\ra$
prepared in the lab
by executing the quantum circuit Eq.~(\ref{graph_state}).
The fidelity $\la G_n|\rho|G_n\ra$
can be estimated by picking a random element of the stabilizer group $S\in \calS$
and measuring its mean value on the state $\rho$,
\be
\label{eq:graphfidelity}
F=\la G_n|\rho|G_n\ra = 2^{-n} \sum_{S\in \calS} \mathrm{Tr}(\rho S)
=\EE_{S} \mathrm{Tr}(\rho S).
\ee
A comparison between the error-mitigated fidelity $F$
obtained using the full $A$-matrix, TP, and CTMP methods
with weight-$1$ calibration
 is shown on Fig.~\ref{fig:fidvnumq}(a).
We also show raw values of $F$ obtained without error mitigation.
The full $A$-matrix  model was only used for $n\le 7$. 
The data suggests that CTMP provides much more accurate
estimates of the fidelity, compared with the TP model
which systematically over-estimates the fidelity.
The mean value of each stabilizer was estimated by measuring
$M=8192$ copies of the graph state $|G_n\ra$.
The entire experiment (calibration and mean value measurements)
was repeated $16$ times  in order to estimate error bars.

\begin{figure}[h]
	\raggedright
	(a)\\
	\vspace{-10pt}\raggedleft	\includegraphics[width=0.95\columnwidth]{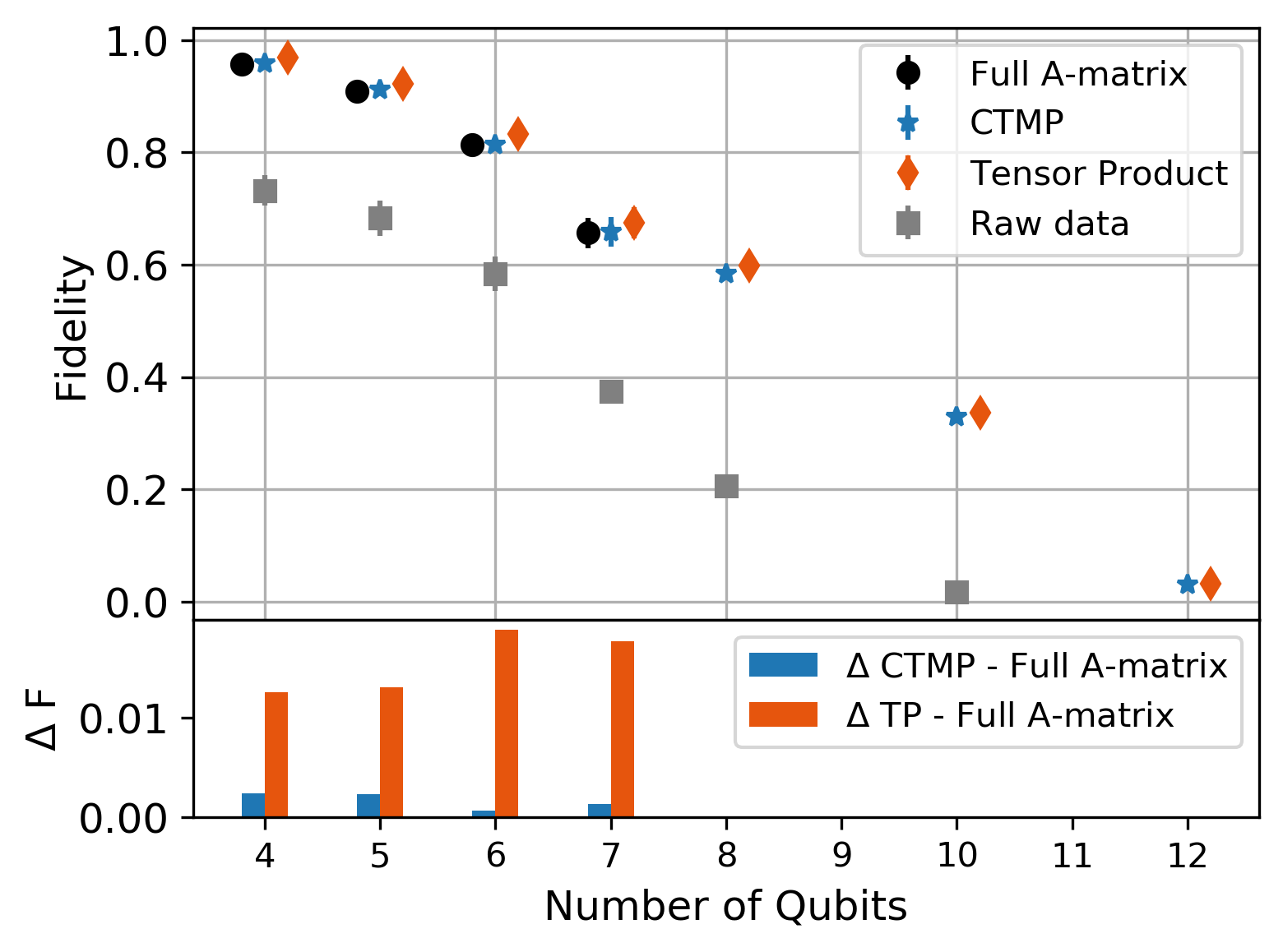}\\
	\raggedright
	(b)	\\
	\vspace{-10pt}\raggedleft
	\includegraphics[width=0.95\columnwidth]{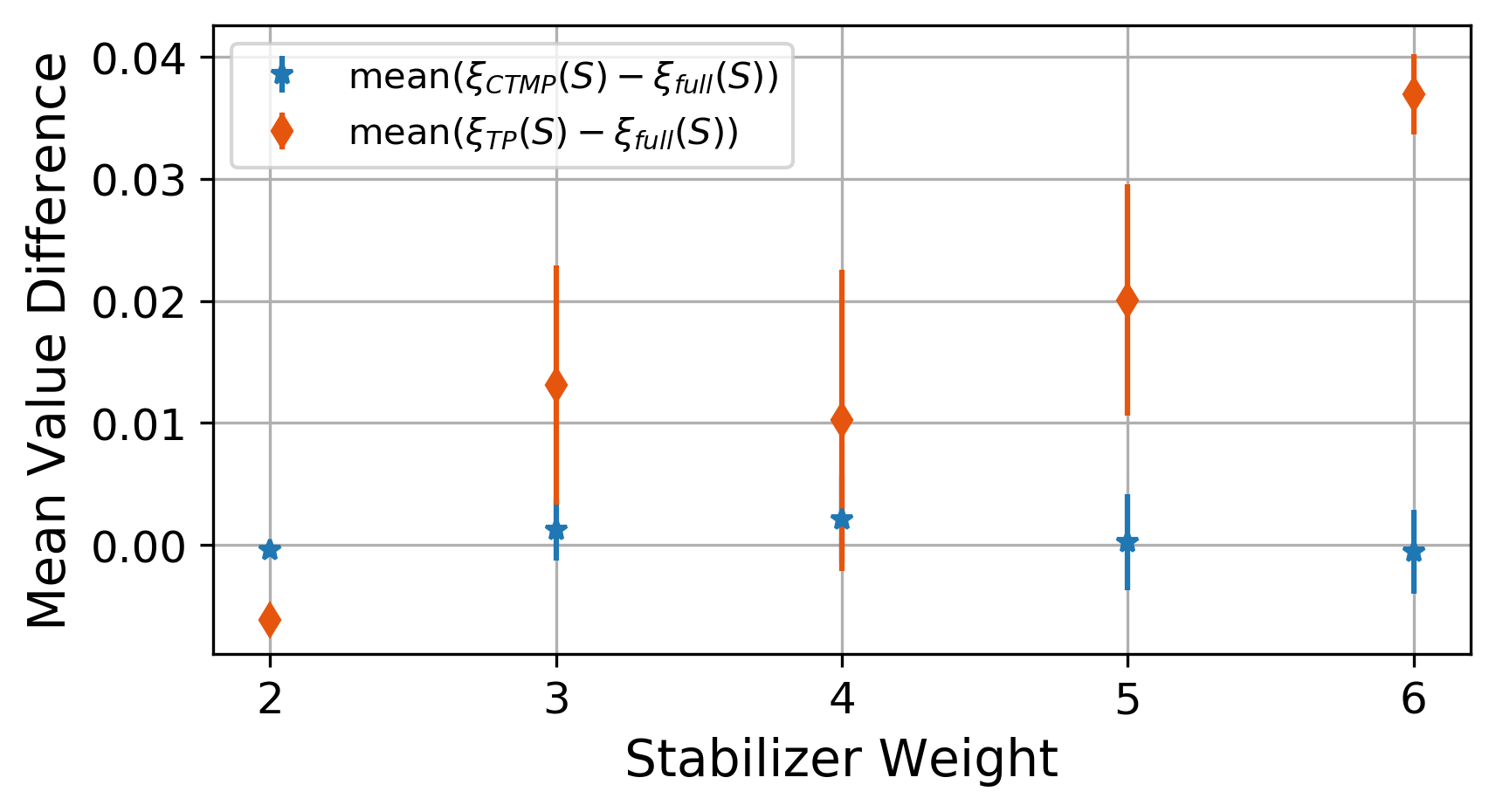}
	\caption{\label{fig:fidvnumq}
	(a) Average mean values of graph state stabilizers for $4\le n\le 12$ qubits
	obtained using   different error mitigation methods (Full A-matrix in black circles, CTMP in blue stars, TP in red diamonds) and without
	error mitigation (raw, grey squares) with the difference, $\Delta F$,  between CTMP and full A-matrix(blue) and TP and full A-matrix (red)  below. The data points on the main plot are offset for clarity.
	The fidelity was estimated by averaging the stabilizer mean value
	over $\approx 100$
	random stabilizers. For each stabilizer $16$ independent experiments were performed
	to estimate error bars.
   (b) Differences in the error-mitigated mean values $\xi_{\mathsf{ctmp}}(S)-\xi_{\mathsf{full}}(S)$ (blue) and
    $\xi_{\mathsf{tp}}(S)-\xi_{\mathsf{full}}(S)$ (red)
   for stabilizers $S$ of the $6$-qubit graph state from (a)
    obtained using the CTMP, the tensor product, and the full $A$-matrix
   noise models.
   Data sets are averaged over all stabilizers of the same weight, indicated on the 
   horizontal axis.
   The graph state has no stabilizers with weight=$1$,
   two stabilizers with weight=$2$, eight stabilizers with weight=$3$, etc. }
\end{figure}

For the six-qubit graph state, there are few enough stabilizers that we can measure all 
of them. Let $\xi_{\mathsf{full}}(S)$, $\xi_{\mathsf{tp}}(S)$, and $\xi_{\mathsf{ctmp}}(S)$ be 
the error-mitigated mean values $\mathrm{Tr}(\rho S)$ of a stabilizer $S$
obtained using the full $A$-matrix, the TP, and the CTMP method respectively.
Fig.~\ref{fig:fidvnumq}(b) shows the differences $\xi_{\mathsf{full}}(S)-\xi_{\mathsf{tp}}(S)$
and $\xi_{\mathsf{full}}(S)-\xi_{\mathsf{ctmp}}(S)$
averaged over stabilizers of the same weight, $|S|$. 
It can be seen that the TP method systematically over-estimates the mean values
 for stabilizers with the weight $3,4,5,6$,
 while CTMP shows a much better agreement with the full $A$-matrix method.  The difference between CTMP
 and TP increases for higher weight stabilizers, which we expect to be more sensitive to correlated readout errors.
Fig.~\ref{fig:cross_talk} (a) shows the combined rate of correlated two-qubit errors $01\leftrightarrow 10$ and
$00\leftrightarrow 11$, as described by the CTMP model, 
for each pair of qubits in the chosen 12-qubit register.
It can be seen that some pairs of qubits, such as $(7,8)$, experience 
correlated errors with a few-percent rate.

\begin{figure}[h]
	{(a) \includegraphics[height=5cm]{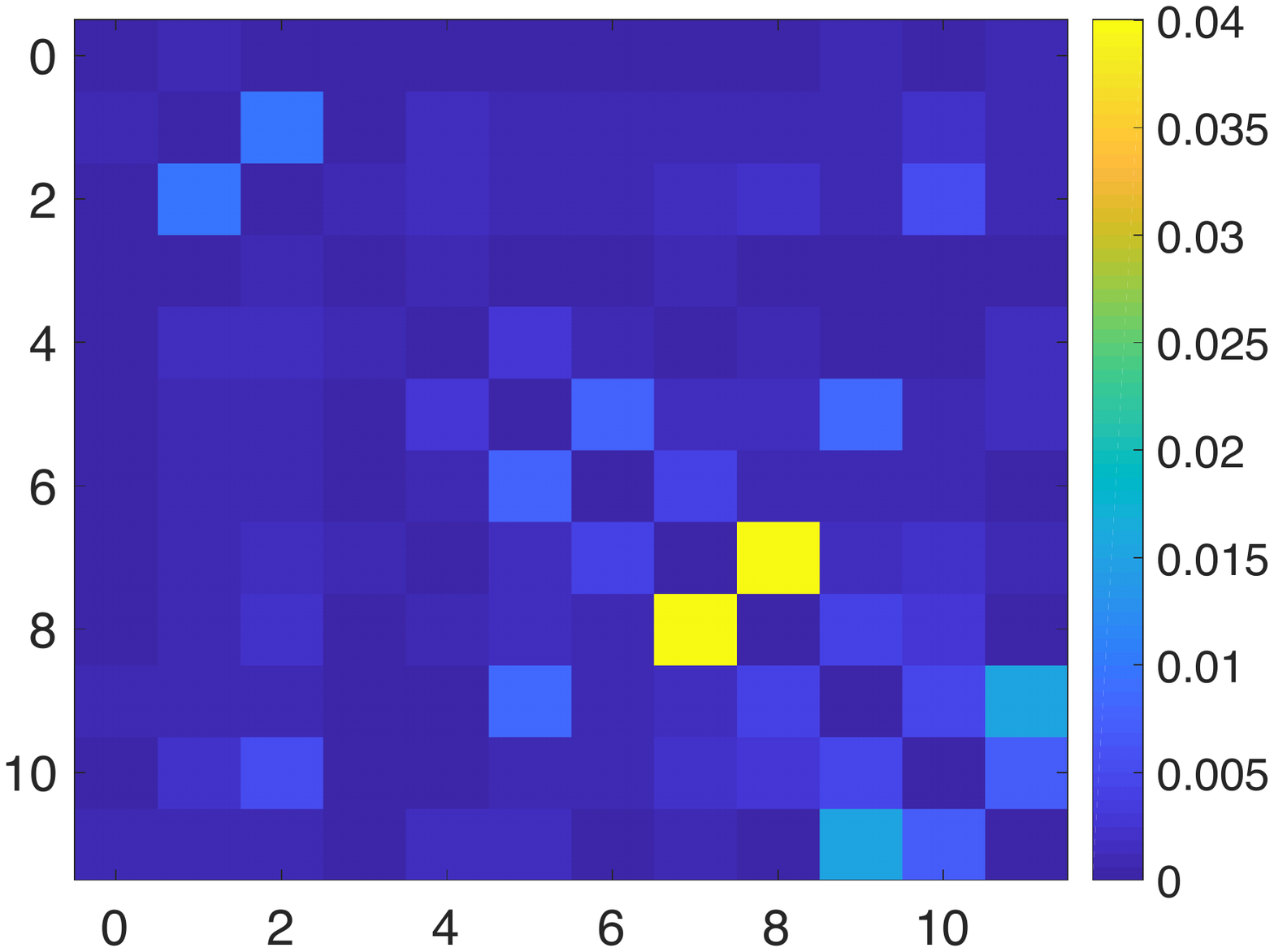}}\\
	{(b) \includegraphics[height=5cm]{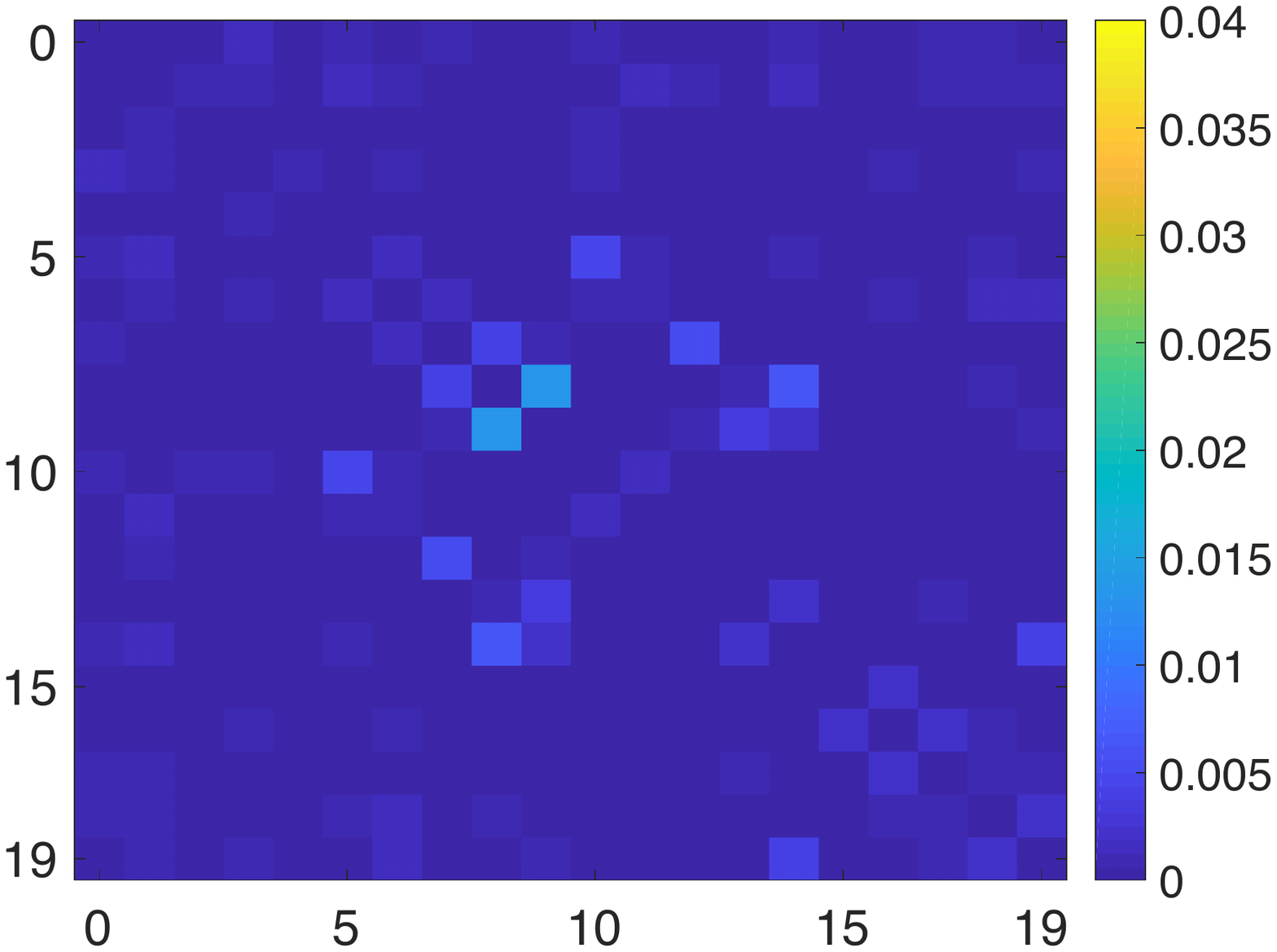}}
	\caption{Combined rate of correlated two-qubit readout errors  for each pair of qubits.(a) $12$-qubit graph state 
	experiment.(b) $20$-qubit random Clifford circuit experiment. 
 \label{fig:cross_talk}}
\end{figure}

Recall that the error mitigation overhead introduced by the CTMP method
is roughly $e^{4\gamma}$, where 
$\gamma$ is the noise strength defined in Eq.~(\ref{gammaG}).
Fig.~\ref{fig:gamma} shows the  noise strength $\gamma$ observed in the graph
state experiments
as a function of the number of qubits. Error bars were estimated from $16$
independent experiments. The data is consistent with a linear scaling,
$\gamma \approx 0.05 n$, and with single qubit readout fidelities of a few percent (see Appendix \ref{app:hardware} for more details on the readout errors in the hardware).
We observed $\gamma \le 0.7$ for all experiments reported above. 
The error mitigation overhead is therefore at most $e^{4\cdot 0.7}\le 20$, indicated a factor of $20$ increase in the number of measurements required.

\begin{figure}[h]
	\includegraphics[width=8cm]{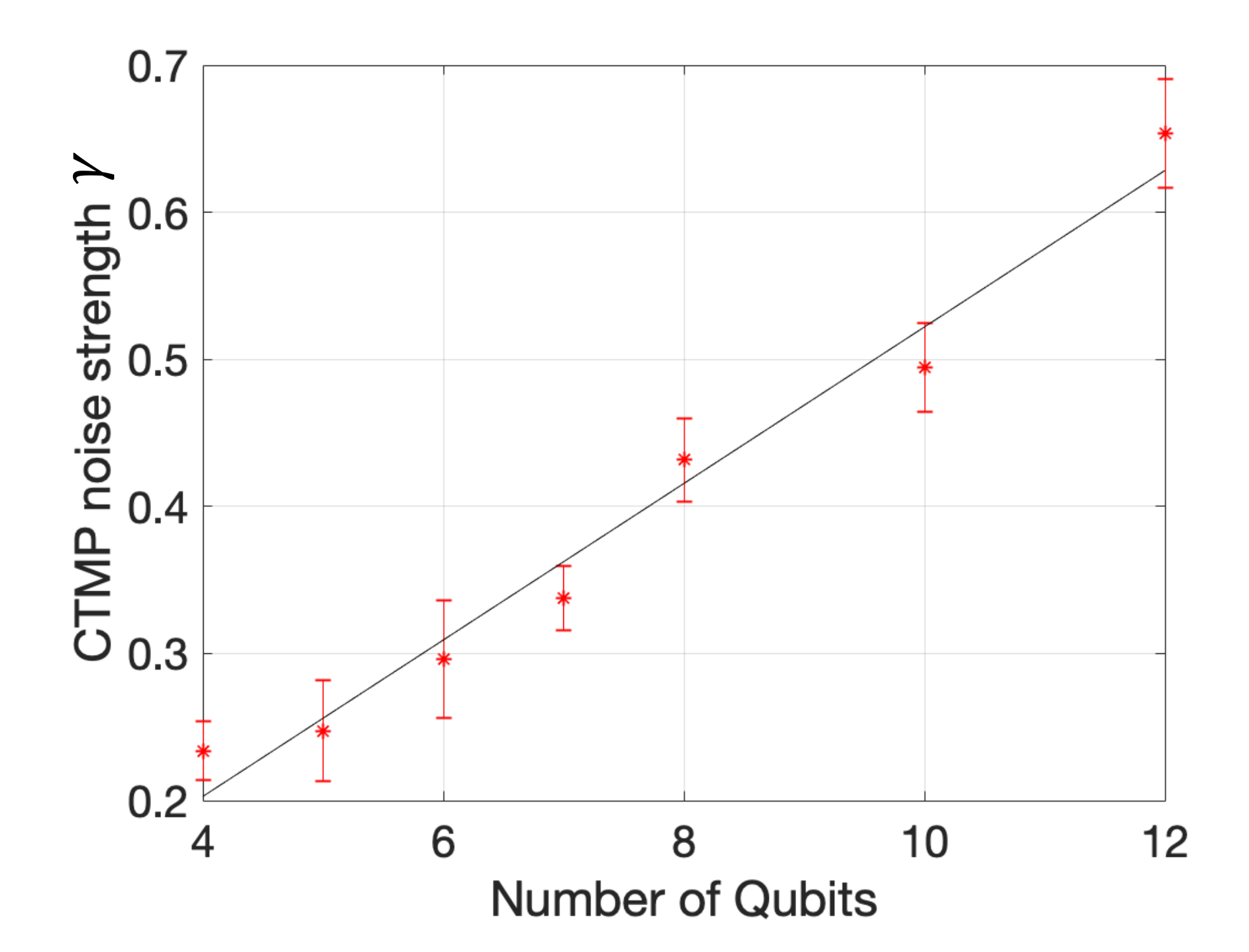}
	\caption{\label{fig:gamma} CTMP noise strength 
		$\gamma =\max_x -\la x|G|x\ra$, where
		$G$ is is generator matrix of the CTMP model such that $A_{\mathsf{ctmp}}=e^G$.
		The error mitigation overhead introduced by the CTMP method
		scales as $e^{4\gamma}$. The data suggests a linear scaling, $\gamma \approx 0.05 n$.
		We observed $\gamma \approx 1.1$ in the $n=20$ experiment.
	}
\end{figure}
\begin{figure*}
	{(a)\includegraphics[height=5.5cm]{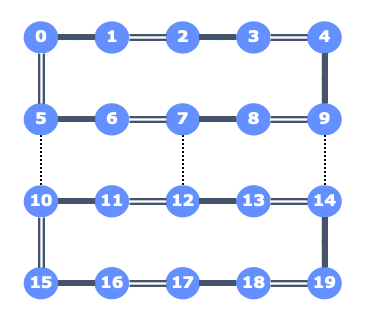} }
	{(b)\includegraphics[height=5.5cm]{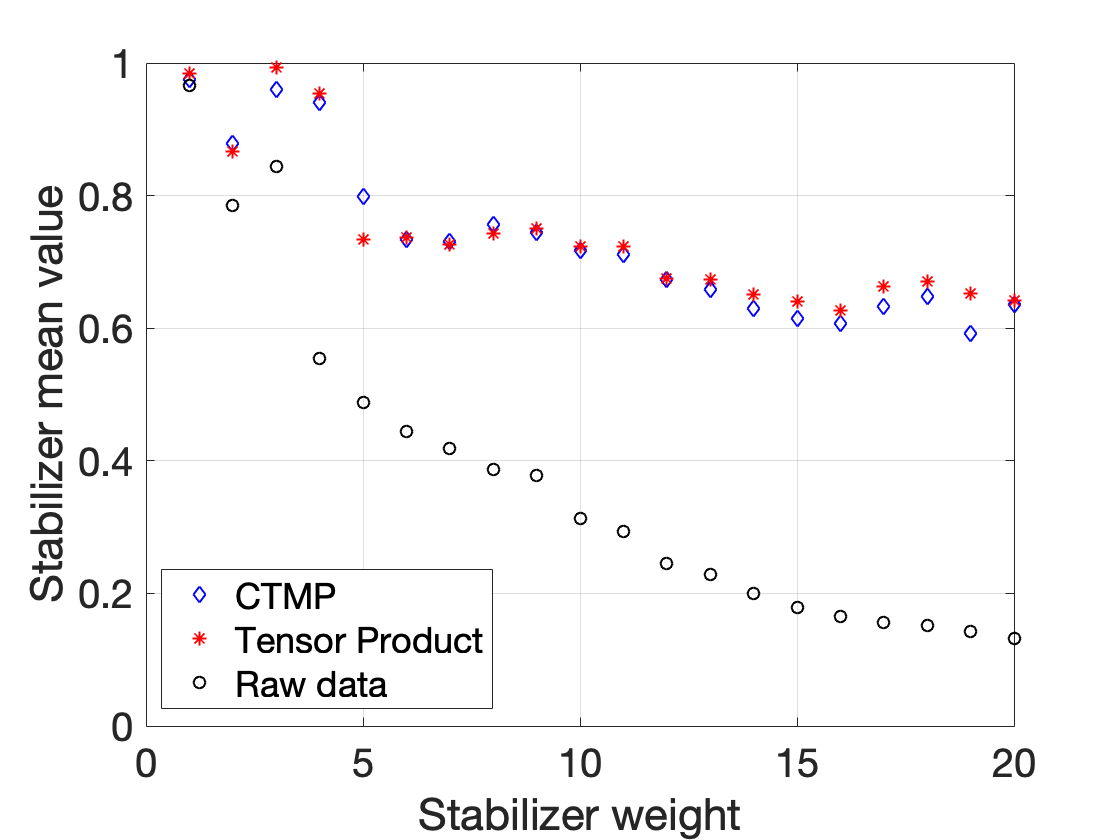} }
	\caption{\label{fig:rc_part2} Error mitigated stabilizer measurements
	performed on a $20$-qubit state $|\psi\ra$ generated by a random depth-4 Clifford circuit.
	The circuit consists of alternating layers of single-qubit Clifford operators 
	and entangling layers composed of nearest-neighbor CNOT gates. (a) the coupling map for the 20-qubit   device, $\mathsf{ibmq\_johannesburg}$.
	The patterns of CNOTs 
	in the two entangling layers are indicated by single and double solid lines, with unused physical connections in dotted lines.
	(b) Error mitigated and raw mean values $\la \psi|S|\psi\ra$
	for $\approx 500$ stabilizer operators $S$ chosen such that the weight of $S$ varies between $1$ and $20$.
	Each data point represents the average over all stabilizers $S$ with a given weight.
	In the absence of gate and readout errors each stabilizer $S$ has mean value $+1$.
	Error mitigation was performed using the CTMP and Tensor Product  methods with the Hadamard-type
	calibration.
	Black circles indicate raw mean values measured without error mitigation.
	}
\end{figure*}
 
 Our last experiment demonstrates error mitigated
 stabilizer measurements on an entangled  $20$-qubit  state $|\psi\ra$
 generated by a random depth-$4$ Clifford circuit.
The circuit consists of two layers of CNOT gates on nearest-neighbor pairs of qubits and two layers of single-qubit Clifford operators, see Fig.~\ref{fig:rc_part2}.
Such state $|\psi\ra$ can be specified by a group of Pauli-type stabilizers $S\in \pm \{I,X,Y,Z\}^{\otimes 20}$
such that $S|\psi\ra = |\psi\ra$.
We have measured mean values $\la \psi|S|\psi\ra$
for $\approx 500$ stabilizers $S$ chosen randomly such that the weight of $S$ varies between $1$ and $20$.
In the absence of readout and gate errors one has $\la \psi|S|\psi\ra=1$ for all stabilizers $S$.
 Our results for the error mitigated mean values $\la \psi|S|\psi\ra$
obtained with the TP and CTMP methods and Hadamard-type calibration
are shown on Fig.~\ref{fig:rc_part2}. 
In contrast to the graph state experiment, error rates associated with
correlated two-qubit errors $01\leftrightarrow 10$ and $00\leftrightarrow 11$
observed in the $20$-qubit experiment are less than $1\%$ for all pairs of qubits,
see Fig.~\ref{fig:cross_talk}(b).  Accordingly, 
the difference between error mitigated mean values obtained using
the TP and CTMP methods is less pronounced.

The $20$-qubit experiment showcases the scalability of our method.
For example, 
computing the error-mitigated mean value of a $20$-qubit  stabilizer
using the CTMP method with $T=10^6$ samples in Algorithm~1 takes about $2$ seconds
on a laptop computer (recall that $T$ affects the cost of a classical post-processing but not the
number of experiments).
Extracting CTMP error rates from the $20$-qubit calibration data
takes about $1$ second.
Thus we expect that our methods can be scaled up to a larger number of qubits.

\section{conclusions}

We  introduced scalable methods of mitigating readout errors in multi-qubit experiments.
Our methods are capable of mitigating correlated cross-talk errors and enable
efficient implementation of the calibration and the noise inversion steps.
The proposed error mitigation methods are demonstrated experimentally for
measurements of up to $20$ qubits. We believe that our techniques will be useful for many near term quantum applications, and their scalability will be increasingly important as the problem sizes become larger.

\begin{acknowledgements}
The authors thank Kristan Temme for helpful discussions and Neereja Sundaresan for experimental contributions.
This work was supported in part by ARO under Contract No. W911NF-14-1-0124
and by the IBM Research Frontiers Institute.
The authors declare that they have no competing interests.
\end{acknowledgements}

\appendix

\section{Proof of Lemma~\ref{lemma:xi}}
\label{app:1}

Represent the measurement outcomes $s^1,\ldots,s^M$ by a 
probability vector $|\psi\ra = M^{-1}\sum_{i=1}^M |s^i\ra$.
Then
\[
\xi = \sum_x O(x) \la x|A^{-1} |\psi\ra
\]
and
\be
\label{var1}
\EE(\xi^2)=\sum_{x,y} O(x) O(y) \la x|A^{-1} \EE(|\psi\ra\la \psi|) (A^{-1})^T |y\ra.
\ee
Let $P(x)=\la x|\rho|x\ra$ and $|P\ra = \sum_x P(x)|x\ra$.
Then 
\be
\label{var2}
\EE(|\psi\ra\la \psi|) = \frac{M-1}M A|P\ra\la P|A^T + \frac1M \sum_x |x\ra\la x|  \la x|A|P\ra.
\ee
Substituting Eq.~(\ref{var2}) into Eq.~(\ref{var1}) gives
\be
\EE(\xi^2)\le \EE(\xi)^2  + \frac1M \sum_{x,y} O(x) O(y) \la x|A^{-1} D (A^{-1})^T |y\ra,
\ee
where $D$ is a diagonal operator  with entries $\la x|D|x\ra = \la x|A|P\ra$.
Since $\la x|A|P\ra$ is a normalized probability distribution, one 
concludes that $\EE(\xi^2) - \EE(\xi)^2$ is upper bounded by 
\be
\frac1M \max_z \sum_{x,y} |O(x)O(y)| \cdot |\la x|A^{-1} |z\ra|\cdot |\la y|A^{-1}|z\ra|.
\ee
By assumption,  $|O(x)|\le 1$ for all $x$, which gives
$\EE(\xi^2) - \EE(\xi)^2\le M^{-1}\Gamma^2$.
Thus the standard deviation of $\xi$ is at most $M^{-1/2}\Gamma$.

\section{Proof of Lemma~\ref{lemma:QPR}}
\label{app:stochastic}

Suppose $M$ is a real $N\times N$ matrix
(we shall be interested in the case $M=A^{-1}$).
Let us  state necessary and sufficient conditions
under which $M$ admits a decomposition
\be
\label{SD}
M=\sum_a c_a S_a
\ee
where $S_a$ are stochastic matrices
and $c_a$ are real coefficients.
Define the $j$-th column sum of $M$ as
$\sigma_j(M)=\sum_{i=1}^N M_{i,j}$.
\begin{lemma}
	A  matrix $M$ admits a decomposition Eq.~(\ref{SD}) iff
	all column sums of $M$ are the same. 
\end{lemma}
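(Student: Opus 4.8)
The plan is to prove both directions, the forward one being essentially immediate and the converse requiring a short explicit construction.

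For necessity, I would start from a putative decomposition $M=\sum_a c_a S_a$ with each $S_a$ stochastic, and simply take column sums: since every column of a stochastic matrix sums to $1$, linearity of $\sigma_j$ gives $\sigma_j(M)=\sum_a c_a\,\sigma_j(S_a)=\sum_a c_a$ for every $j$. Hence all column sums of $M$ agree (with common value $\sum_a c_a$), which is the stated condition.

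For sufficiency, suppose every column sum of $M$ equals a common value $\sigma\in\mathbb{R}$. Let $J$ be the $N\times N$ all-ones matrix and $E=N^{-1}J$ the uniform stochastic matrix. I would pick a real $\lambda$ large enough that both $M_{ij}+\lambda\ge 0$ for all $i,j$ and $\sigma+\lambda N>0$ — e.g. any $\lambda>\max\{0,\,-\min_{i,j}M_{ij},\,-\sigma/N\}$ works. Then $M+\lambda J$ has non-negative entries and every column sums to $\sigma+\lambda N>0$, so
\[
S:=(\sigma+\lambda N)^{-1}(M+\lambda J)
\]
is a stochastic matrix, and rearranging yields $M=(\sigma+\lambda N)\,S-\lambda N\,E$, a real linear combination of the two stochastic matrices $S$ and $E$. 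This proves the claim (and incidentally shows two stochastic matrices always suffice).

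I do not anticipate a real obstacle here; the only points needing a little care are ensuring that the shifted matrix $M+\lambda J$ has a strictly positive column sum so that the normalization defining $S$ is legitimate (this is why I impose the condition $\sigma+\lambda N>0$ rather than merely $\ge 0$), and keeping the bookkeeping of the common column sum $\sigma$ straight through the construction.
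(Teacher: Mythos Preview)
Your argument is correct. The forward direction matches the paper verbatim. For the converse, however, the paper takes a different route: it introduces the rank-one matrices $T(i,j)=|1\rangle\langle j|-|i\rangle\langle j|$, notes that each is a difference of two stochastic matrices with zero column sums, and uses them to sweep all mass into the first row, reducing $M$ to $\gamma$ times the stochastic matrix $|1\rangle\langle e|$. Your approach---shifting by $\lambda J$ to force nonnegativity and then normalizing---is more economical and in fact proves the sharper fact that two stochastic matrices always suffice, whereas the paper's construction uses on the order of $N^2$ of them. The paper's decomposition, on the other hand, is built from permutation-like pieces and dovetails more naturally with the inductive ``peel off one nonzero at a time'' argument used in the subsequent lemma bounding $\|c\|_1$; your two-term decomposition would typically have $\|c\|_1$ far from the optimal value $\Gamma(M)$.
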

\begin{proof}
	From Eq.~(\ref{SD}) one gets
	$\sigma_j(M)=\sum_a c_a$ for all $j$. 
	
	Conversely, suppose $\sigma_j(M)=\gamma$ for all $j$.
	For any integers $i,j\in [N]$ define
	a matrix $T(i,j)=|1\ra \la j| - |i\ra \la j|$.
	Note that all column sums of $T(i,j)$ are zero.
	One can easily verify that $T(i,j)$ is a linear combination
	of two stochastic matrices.
	Let
	\[
	M'=M + \sum_{i=2}^N \sum_{j=1}^N M_{i,j} T(i,j).
	\]
	The matrix $M'$ has zero rows $2,3,\ldots,N$
	and $M'_{1,j}=\gamma$ for all $j$.
	Thus $M'=\gamma S$ for a stochastic matrix $S$.
	This shows that $M$  is a linear combination of stochastic matrices.
\end{proof}
As a corollary, the inverse of any stochastic matrix can be written
as a linear combination of stochastic matrices. Indeed, if 
$M=A^{-1}$ for some stochastic matrix $A$
then all column sums of $M$ are equal to one.

Given a matrix $M$ let $\Gamma(M)$ be the maximum $1$-norm
of its columns,
\[
\Gamma(M)= \max_{j} \; \sum_{i=1}^N |M_{i,j}|.
\]
Lemma~\ref{lemma:QPR} is a special case of the following.
\begin{lemma}
	Suppose all column sums of $M$ are the same. Then
	any decomposition Eq.~(\ref{SD})
	obeys $\|c\|_1 \ge \Gamma(M)$ and
	$\|c\|_1=\Gamma(M)$ for some decomposition.
\end{lemma}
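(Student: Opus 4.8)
The plan is to establish the two halves of the statement — the lower bound $\|c\|_1 \ge \Gamma(M)$ valid for every decomposition, and the existence of a decomposition attaining equality — by essentially independent arguments.

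For the lower bound I would take an arbitrary decomposition $M = \sum_a c_a S_a$ and look at a single column. Fixing $j$ and using the triangle inequality together with $(S_a)_{ij}\ge 0$,
\[
\sum_{i=1}^{N} |M_{ij}| = \sum_{i=1}^{N} \Bigl| \sum_a c_a (S_a)_{ij}\Bigr| \le \sum_a |c_a| \sum_{i=1}^{N} (S_a)_{ij} = \sum_a |c_a| = \|c\|_1,
\]
where the penultimate equality uses that each $S_a$ is stochastic, hence has unit column sums. Maximizing over $j$ gives $\Gamma(M)\le\|c\|_1$.

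For the achievability I would produce a decomposition into just \emph{two} stochastic matrices. Let $\gamma$ be the common column sum of $M$ (which exists by hypothesis) and note $\Gamma(M)\ge|\gamma|$. Split $M = M^+ - M^-$ into its entrywise nonnegative and nonpositive parts, which have disjoint supports, and let $p_j,n_j$ be the $j$-th column sums of $M^+,M^-$; then $p_j - n_j = \gamma$ and $p_j + n_j = \sum_i|M_{ij}| \le \Gamma(M)$. Put $a = (\Gamma(M)+\gamma)/2$ and $b = (\Gamma(M)-\gamma)/2$, so $a,b\ge 0$, $a+b = \Gamma(M)$, $a-b=\gamma$, and the bound on $p_j+n_j$ forces $p_j\le a$ and $n_j\le b$ for all $j$. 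Define a nonnegative correction $R$ by placing the mass $a-p_j = b-n_j$ in the first row of each column $j$. Then $M^+ + R$ and $M^- + R$ are nonnegative with all column sums equal to $a$ and $b$ respectively, so $S_1 = (M^+ + R)/a$ and $S_2 = (M^- + R)/b$ are stochastic (in the degenerate cases $a=0$ or $b=0$ one checks directly that $M$ is $\le 0$ or $\ge 0$ with uniform column sums and uses a single stochastic matrix), and $a S_1 - b S_2 = M^+ - M^- = M$. This is a decomposition with $\|c\|_1 = a+b = \Gamma(M)$, matching the lower bound.

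The one step that needs genuine care, and where the hypothesis is used, is verifying $p_j\le a$ and $n_j\le b$ \emph{simultaneously for every column}: this works precisely because constancy of the column sums makes $a$ and $b$ column-independent, while $p_j+n_j = \sum_i|M_{ij}| \le \Gamma(M)$ supplies exactly the needed slack. The remaining checks — nonnegativity of $a$, $b$, $R$, that $M^\pm + R\ge 0$, and the edge cases $\Gamma(M)=0$ and $ab=0$ — are routine. Specializing to $M = A^{-1}$, whose column sums are all $1$, recovers Lemma~\ref{lemma:QPR} with $\Gamma(A^{-1})$ equal to the quantity $\Gamma$ of Eq.~(\ref{Gamma}).
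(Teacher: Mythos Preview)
Your proof is correct and takes a genuinely different route from the paper's. The lower bound argument is identical, but for achievability the paper proceeds by induction on the number of nonzero entries of $M$: at each step it peels off a term $\omega\sum_j |f(j)\rangle\langle j|$ (a deterministic stochastic matrix), simultaneously reducing the nonzero count and $\Gamma$ by $|\omega|$, with separate case analyses for $\gamma\ne 0$ and $\gamma=0$. Your construction is direct and noninductive: splitting $M=M^+-M^-$, padding both parts with a single row of slack to equalize column sums, and normalizing yields a decomposition with at most \emph{two} stochastic summands (one or zero in the degenerate cases). This is strictly more informative than what the paper's argument produces---the induction can output up to as many summands as $M$ has nonzeros---and it avoids the case split on the sign of the common column sum. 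The paper's approach, on the other hand, outputs only deterministic (function-type) stochastic matrices, which could be convenient if one cared about sampling from each $S_a$.
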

\begin{proof}
	Suppose $M$ is written as in Eq.~(\ref{SD}).
	By triangle
	inequality,
	\be
	\label{SD1}
	\Gamma(M)= \sum_{i} |M_{i,j}| \le \sum_a |c_a| \sum_i \la i|S_a|j\ra = \|c\|_1
	\ee
	Conversely, let $\Lambda(M)$ be the minimum $1$-norm $\|c\|_1$,
	where the minimum is over all decompositions Eq.~(\ref{SD}).
	From Eq.~(\ref{SD1}) one infers that $\Lambda(M)\ge \Gamma(M)$.
	It suffices to prove that
	\be
	\label{SD2}
	\Lambda(M)\le \Gamma(M).
	\ee
	Let $k$ be the number of non-zero elements in $M$.
	We shall prove Eq.~(\ref{SD2}) using induction in $k$. The base of induction is $k=0$.
	In this case $M$ is the all-zeros matrix and $\Lambda(M)=\Gamma(M)=0$.
	
	Suppose we have already proved  Eq.~(\ref{SD2}) for all
	matrices $M$ with at most $k$ non-zeros. 
	Consider a matrix $M$ with $k+1$ non-zeros.
	Let $\gamma=\sigma_j(M)$ be the column sum of $M$
	(by assumption, it does not depend on $j$).
	We shall consider two cases.\\
	\noindent
	{\em Case~1:} $\gamma\ne 0$.
	Then all columns of $M$ are non-zero and each column contains
	a non-zero entry $M_{i,j}$  with the same sign as $\gamma$.
	Let $\omega$ be a smallest magnitude non-zero entry of $M$
	with the same sign as $\gamma$.
	Assume wlog that $\omega=M_{1,1}$ (otherwise, permute rows and/or columns of $M$).
	Choose any function $f\, : \, [N]\to [N]$ such that 
	$f(1)=1$ and $M_{f(j),j}$ is a largest magnitude entry in the $j$-th
	column with the same sign as $\gamma$.
	Define a new matrix
	\be
	\label{SD3}
	M'=M - \omega \sum_{j=1}^N |f(j)\ra\la j|.
	\ee
	Note that $M'$ has at most $k$ non-zeros
	since $M_{f(j),j}\ne 0$ for all $j$ and $M'_{1,1}=0$.
	Since $\sum_{j=1}^N |f(j)\ra\la j|$ is a stochastic matrix, Eq.~(\ref{SD3}) gives 
	\be
	\label{SD4}
	\Lambda(M)\le \Lambda(M') + |\omega|.
	\ee
	We claim that 
	\be
	\label{SD5}
	\Gamma(M') \le \Gamma(M)-|\omega|.
	\ee
	Indeed, let $\Gamma_j(M)$ and $\Gamma_j(M')$ be the $1$-norm of the $j$-th column
	of $M$ and $M'$ respectively. We have $\Gamma_1(M')=\Gamma_1(M)-|\omega|$
	since the first column of $M'$ is obtained from the one of $M$ by setting to zero
	the entry $M_{1,1}=\omega$. If $j\ge 2$ then
	\begin{eqnarray}
	\label{SD6}
	\Gamma_j(M') &=& |M_{f(j),j}-\omega| +  \sum_{i\ne f(j)} |M_{i,j}| \nonumber \\
	&=&  \mathrm{sgn}(\omega)\cdot (M_{f(j),j}-\omega) + \sum_{i\ne f(j)} |M_{i,j}|  \nonumber\\ 
	&=&\Gamma_j(M) -|\omega|. \nonumber
	\end{eqnarray}
	Here we used $\mathrm{sgn}(\omega)=\mathrm{sgn}(M_{f(j),j})$ 
	and $|\omega|\le |M_{f(j),j}|$.
	This proves Eq.~(\ref{SD5}).
	By induction hypothesis, $\Lambda(M')\le \Gamma(M')$. Combining this
	and Eqs.~(\ref{SD4},\ref{SD5}) one arrives at
	\be
	\Lambda(M)\le \Lambda(M')+|\omega| \le \Gamma(M')+|\omega| \le   \Gamma(M).
	\ee
	
	\noindent
	{\em Case~2:} $\gamma=0$. Then each non-zero column of $M$ contains  at least
	one positive and at least one negative entry.  
	Let $\omega$ be a non-zero entry of $M$ with the smallest magnitude.
	Assume wlog that $\omega=M_{1,1}$.
	Choose any function $f\, : \, [N]\to [N]$ such that 
	$f(1)=1$ and  $M_{f(j),j}$ is the largest magnitude entry in the $j$-th
	column with the same sign as $\omega$.
	If the $j$-th column is zero then set $f(j)$ arbitrarily. 
	Define a new matrix $M'$ using Eq.~(\ref{SD3}). The same arguments as above
	show that $\Gamma_1(M')=\Gamma_1(M)-|\omega|$ and
	$\Gamma_j(M')\le \Gamma_j(M)-|\omega|$ whenever the $j$-th column of $M$
	is non-zero. Suppose now that the $j$-th column of $M$ is zero.
	Then $\Gamma_j(M)=0$ and $\Gamma_j(M')=|\omega|$.
	We claim that $\Gamma(M)\ge 2|\omega|$. Indeed, by assumption, $M$
	has at least one non-zero column. Such column must have at least two non-zero
	entries with the magnitude at least $|\omega|$, that is,
	$\Gamma(M)\ge 2|\omega|$. Thus, if the $j$-th column of $M$ is zero then
	$\Gamma_j(M')=|\omega|\le \Gamma(M)-|\omega|$.
	This proves Eq.~(\ref{SD5}).The same arguments as above give $\Lambda(M)\le \Gamma(M)$.
\end{proof}

\section{Error mitigation for product noise}
\label{app:product}

Here we  instantiate Algorithm~1 for 
the tensor product noise defined in 
Eq.~(\ref{product_noise}).
For each $j\in [n]$ 
define the following quantities
\[
c_{j0} = \frac{2-\epsilon_j-\eta_j}{2(1-\epsilon_j-\eta_j)},
\qquad
c_{j1} = -\frac{\epsilon_j+\eta_j}{2(1-\epsilon_j-\eta_j)},
\]
\[
c_{j2} = \frac{\epsilon_j-\eta_j}{2(1-\epsilon_j-\eta_j)},
\qquad
c_{j3}= \frac{\eta_j-\epsilon_j}{2(1-\epsilon_j-\eta_j)}.
\]
Define single-bit boolean functions $F_0,F_1,F_2,F_3$ as
\[
F_0(b)=b, \quad F_1(b)=1\oplus b,\quad F_2(b)=0, \quad F_3(b)=1.
\]
Here $b\in \{0,1\}$. Consider the following
algorithm.
\begin{center}
	\fbox{\parbox{\linewidth}{
			{\bf Algorithm 2}
			\begin{algorithmic}
				\State{$T\gets  4\delta^{-2}\Gamma^2 $}
				\For{$t=1$ to $T$}
				\State{$\xi^t \gets 1$}
				\State{Sample $i\in [M]$ uniformly at random.}
				\State{$x^t \gets s^i$}
				\For{$j=1$ to $n$}
				\State{\parbox{6.6cm}{Sample $\alpha \in \{0,1,2,3\}$ with probabilities $q_{j\alpha}=
						|c_{j\alpha}|(|c_{j0}|+|c_{j1}| + |c_{j2}| + |c_{j3}|)^{-1}$}}
				\State{$x^t_j \gets F_\alpha (x^t_j)$}
				\State{$\xi^t \gets \xi^t \cdot \mathrm{sgn}(c_{j\alpha})$}
				\EndFor
				\EndFor
				\State{\Return{$\xi'=T^{-1} \Gamma \sum_{t=1}^T \xi_t O(x^t)$}}
			\end{algorithmic}
	}}
\end{center}
Here $x^t_j\in \{0,1\}$ is the $j$-th bit of $x^t$.
We claim that  Algorithm~2 is a special case of Algorithm~1 stated
in the main text.
Indeed, define stochastic matrices
\[
A_j= \left[ \ba{cc} 1-\epsilon_j & \eta_j \\
\epsilon_j & 1-\eta_j \\ \ea \right]
\quad \mbox{and} \quad
V_i = \sum_{b=0,1} |F_i(b)\ra\la b|.
\]
A simple calculation shows that 
\be
\label{quasiProb1}
A_j^{-1} 
=c_{j0} V_0 + c_{j1} V_1 + c_{j2} V_2  + c_{j3} V_3.
\ee
and
\be
\label{quasiProb2}
|c_{j0}|+|c_{j1}| + |c_{j2}| + |c_{j3}| = \frac{1+|\epsilon_j-\eta_j|}{1-\epsilon_j-\eta_j}.
\ee
Taking $n$-fold tensor product of stochastic decompositions
defined
in Eq.~(\ref{quasiProb1})
one gets $A^{-1}=\sum_\alpha c_\alpha S_\alpha$, where
$S_\alpha$ is a tensor product of stochastic matrices $V_i$
and $c_\alpha$ are real coefficients such that $\|c\|_1=\Gamma$.
The inner loop of Algorithm~2  samples $x^t$ from the distribution
$S_\alpha |s^i\ra$,
where $\alpha$ is sampled from $q_\alpha=|c_\alpha|/\Gamma$.
 Thus Algorithm~2 is indeed
a special case of Algorithm~1.
We conclude that the output of 
Algorithm~2  satisfies
$|\xi' - \xi|\le \delta$ with high probability
(at least $2/3$).
The algorithm has runtime $\approx nT\tau \sim nM \tau$, where
$M$ is the number of measurements determined by Eqs.~(\ref{shots_upper_bound},\ref{GammaProduct})
and $\tau$ is the cost of computing the observable $O(x)$ for a given $x$.
For simplicity, here we assume that each sampling step in Algorithm~2 can be done in unit time.

\section{Experimental Hardware}
\label{app:hardware}

\begin{figure*}[h]
	(a)\includegraphics[width=0.9\columnwidth]{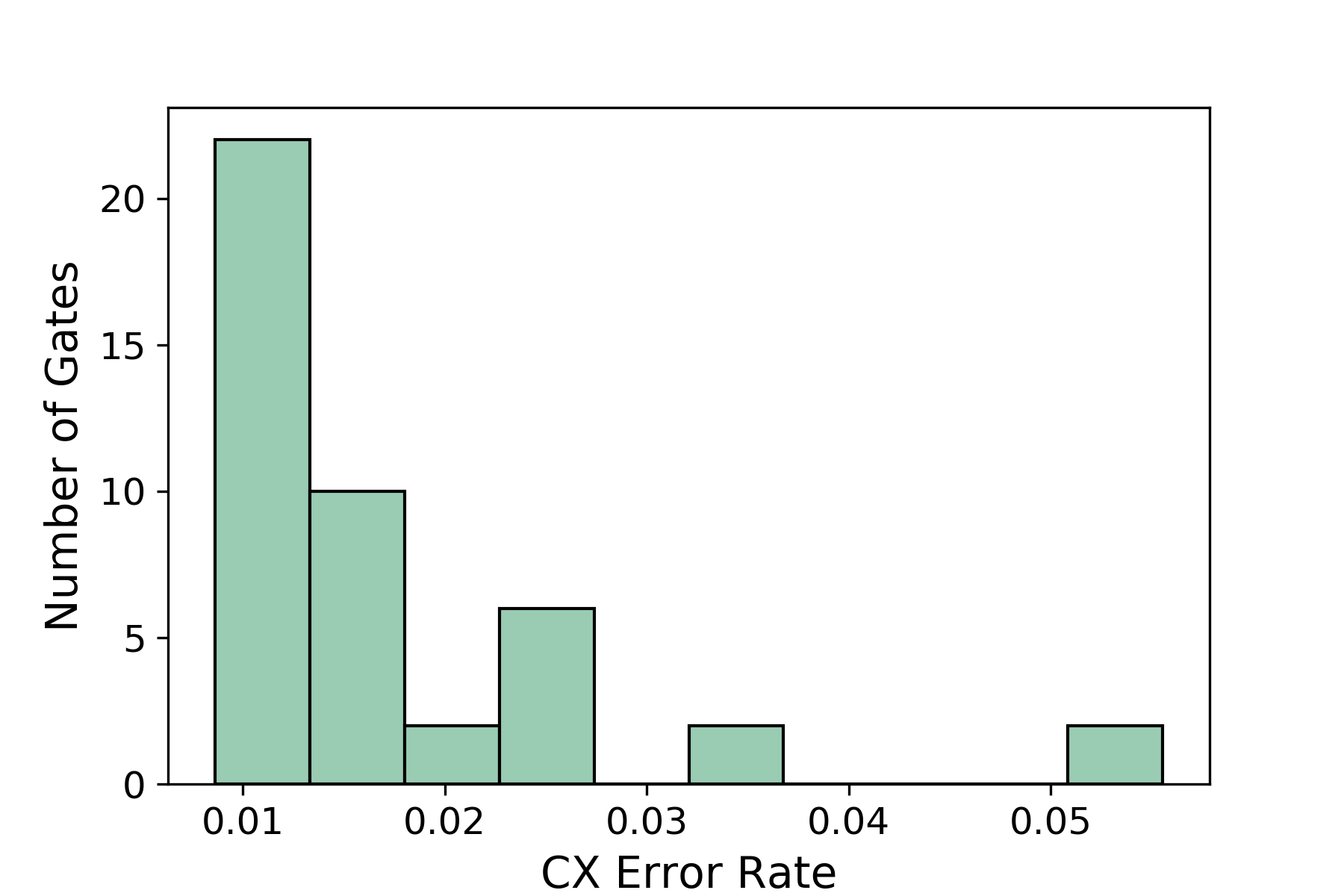}
	(b)\includegraphics[width=0.9\columnwidth]{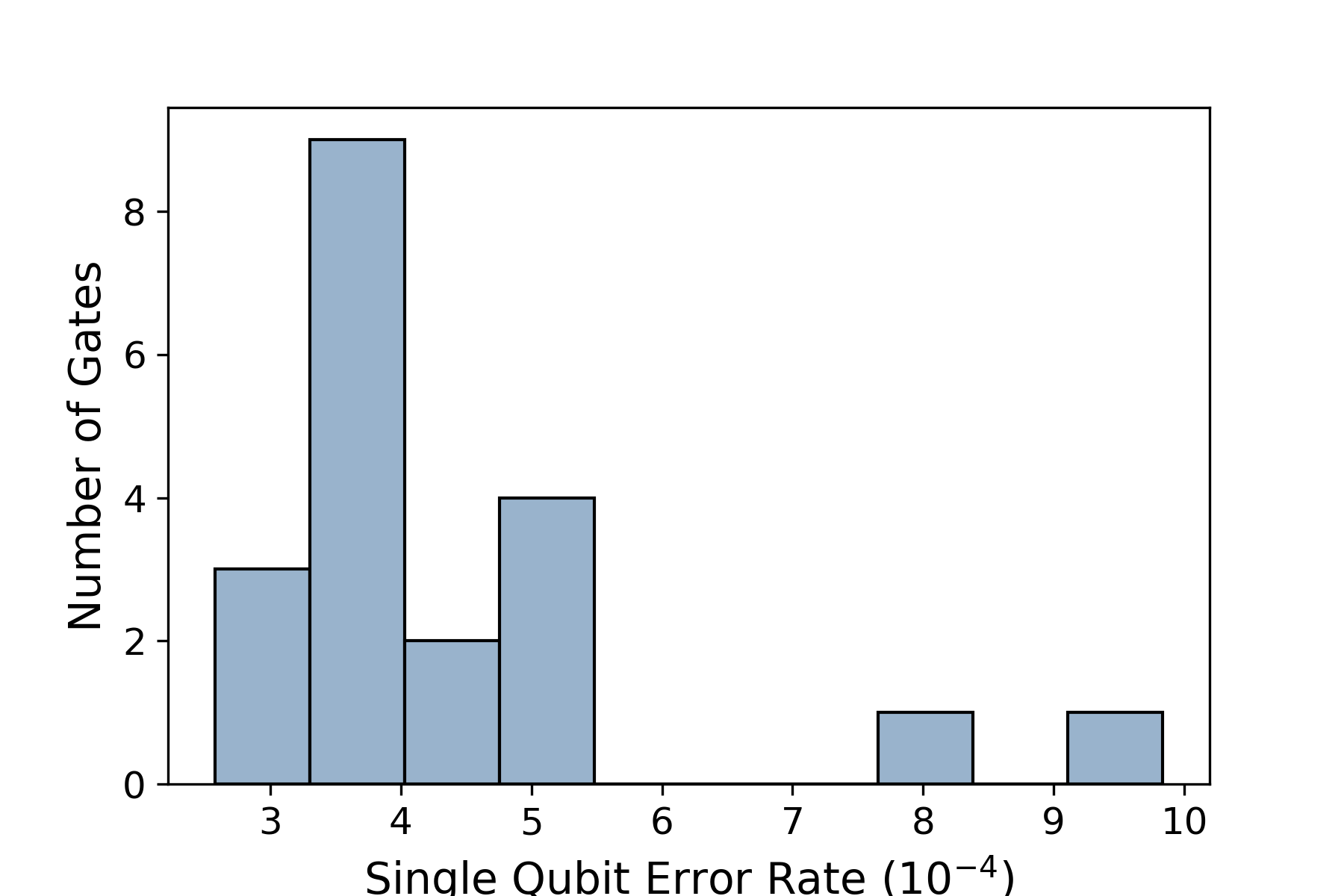}
	(c)\includegraphics[width=0.9\columnwidth]{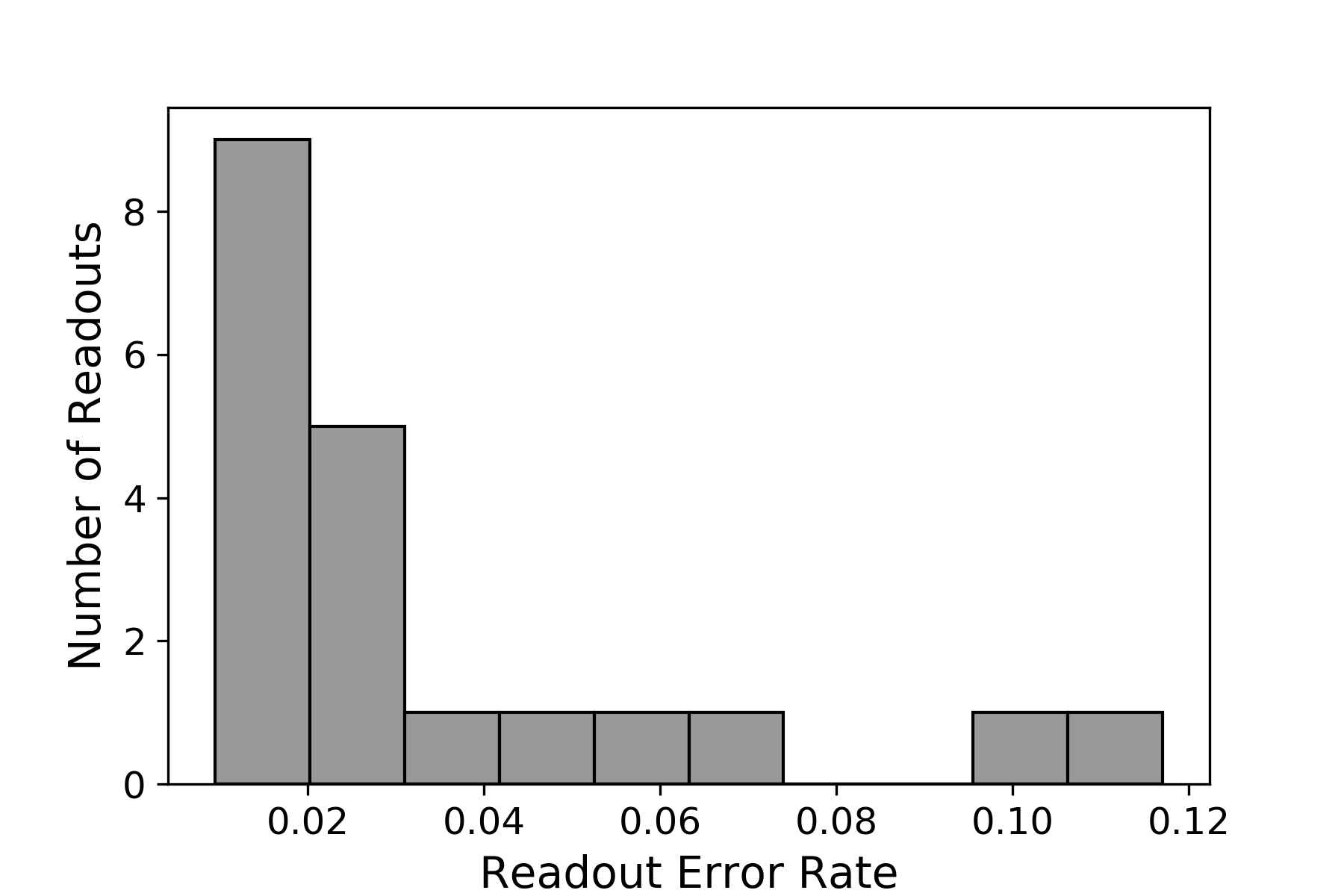}
	\caption{\label{fig:errorrates} Histograms of the (a) two-qubit CX (b) single-qubit and (c) readout error rates across the full 20-qubit device, $\mathsf{ibmq\_johannesburg}$, for a typical set of calibrations.
	}
\end{figure*}

All experiments were performed on the $20$-qubit IBM Quantum device
called $\mathsf{ibmq\_johannesburg}$.  The device coupling map is shown in Fig.~\ref{fig:rc_part2}(a) of the main text.  Typical mean error  rates are $(1.75 \pm 1.05) \times 10^{-2}$ for CX gates, $(4.35\pm1.70) \times 10^{-4}$ for single qubit gates, and $(3.44\pm1.72) \times 10^{-2}$ for readout errors; a more detailed breakdown of the error rates is given in the histograms in Fig.~\ref{fig:errorrates}.  
The graph state experiments in Fig.~\ref{fig:fidvnumq}
were performed on $n$-qubit subsets of qubits 0 through 11.

\end{document}